\newtheorem{theorem}{Theorem}
\newtheorem{lemma}[theorem]{Lemma}
\newtheorem{remark}[theorem]{Remark}
\newtheorem{proof}[theorem]{Proof}
\renewenvironment{proof}[1][Proof]{\noindent\textbf{#1.} }{\ \rule{0.5em}{0.5em}}
\newenvironment{revs}
  {\begin{color}{black} \ignorespaces} 
 {\end{color}}
\begin{document}

\title{Inequality, mobility and the financial accumulation process: A computational economic analysis \thanks{{\bf Accepted for publication in Journal of Economic Interactions and Coordination (DOI: \url{https://doi.org/10.1007/s11403-019-00236-7})}. Authors acknowledge funding for a visiting period from ESCP-Europe. Simone Righi further acknowledges funding from the European Research Council (ERC) under the European Union's Horizon 2020 research and innovation programme (grant agreement No 648693).}}

\author[1]{Yuri Biondi}
\author[2,3]{Simone Righi}
\affil[1]{Cnrs - IRISSO (University Paris Dauphine PSL), Place Mar. Lattre Tassigny 75016 Paris. yuri.biondi@gmail.com}
\affil[2]{University College London, Department of Computer Science, Gower Street 66-72, Office 3.10, WC1E 6EA London (UK). s.righi@ucl.ac.uk}
\affil[3]{MTA TK ``Lend\"{u}let" Research Center for Educational and Network Studies (RECENS), Hungarian Academy of Sciences.}

\renewcommand\Authands{ and }

\maketitle

\begin{abstract}
Our computational economic analysis investigates the relationship between inequality, mobility and the financial accumulation process. Extending the baseline model by Levy et al., we characterise the economic process through stylised return structures generating alternative evolutions of income and wealth through time. First, we explore the limited heuristic contribution of one and two factors models comprising one single stock (capital wealth) and one single flow factor (labour) as pure drivers of income and wealth generation and allocation over time. Second, we introduce heuristic modes of taxation in line with the baseline approach. Our computational economic analysis corroborates that the financial accumulation process featuring compound returns plays a significant role as source of inequality, while institutional arrangements including taxation play a significant role in framing and shaping the aggregate economic process that evolves over socioeconomic space and time.

\vspace{0.7cm}

Keywords: inequality \and economic process \and compound interest \and simple interest \and taxation \and computational economics

\noindent 
\textbf{JEL\ classification}:  C46, C63, D31, E02, E21, E27, D63, H22

\end{abstract}



\section{Introduction and Literature Review}

Capital wealth accumulation is an evergreen matter of economic analysis and policy.\footnote{Hereafter, the term ``capital wealth'' combines concepts of capital and wealth to stress the productive nature of wealth considered by our economic analysis. Indeed, we especially point to financial investments, while durable assets held for consumption are excluded from our analysis.} Public finance and financial macroeconomic models define notions of production, income and capital wealth and study their aggregated  evolution over time (\citealt{bertola2006income}; \citealt{Blanchard2011Macro}; \citealt{snowdon2005modern}), as well as their distribution across individuals (see the survey of economic literature by \citealt{sahota1978theories}).  In particular, some notable efforts aim to explain empirical distributions as driven by either certain stochastic processes, or the combined influence of driving factors such as family environment, talent, education and social status. Recent economic modelling strategies include \cite{nirei2007two} and \cite{gabaix2016dynamics}.

Since the fifties, the standard representation of growth denotes a multiplicative process that is also the standard representation for individual financial investments. This modeling strategy implicitly assumes a single capital stock that is measured and reinvested for the aggregate economy over time (\citealt{perroux1949comptes}; \citealt{stone1986nobel}).
Recent advances in dynamic macroeconomic modelling, based upon the representative agent hypothesis, have been criticized for disregarding the aggregate dimension featured by collective and dynamic phenomena \citealt{gallegati1999beyond}). Previously neglected, issues of  income and wealth distributions have gained socioeconomic momentum in the aftermath of the Global Financial Crisis of 2007-08, including through the 99\% movement in US (\citealt{haldane2014unfair}; \citealt{alvaredo2013top}). This movement claims that the increased financialisation of economy and society involves an increased appropriation of income and wealth by the richest 1\% of the population at detriment of the remaining 99\%, leading to more unequal and allegedly unfair distributions of income and wealth. This distributional issue renewed theoretical interest through influential positions taken by leading economists (\citealt{Krugman2013Why,Krugman2014Inequality,Krugman2014InequalityDrag,stiglitz2012price,solow2014rich}) and policy-makers (Haldane 2014)), as well as through the publication of economic history studies conducted by Thomas Piketty and Emmanuel Saez among others, reconstructing long-run statistical time series of income and wealth distributions in US and abroad  (\citealt{atkinson2009top}, \citealt{piketty2014capital},\citealt{piketty2014inequality}).

According to \cite{haldane2014unfair}, ``as ever, dispute rages about the precise statistics. But the long-term patterns are clear enough - and remarkable. Almost half of the growth in US national income between 1975 and 2007 accrued to the top 1\% (\citealt{OECD2014}). In the UK and US, the top 1\%'s share of the income pie has more than doubled since 1980 to around 15\% and their share of the wealth pie has been estimated at up to a third - more than the whole bottom half of the population put together (\citealt{ONS,wolff2012asset}). The five richest households in the UK have greater wealth than the bottom fifth of the population (\citealt{Oxfam})".\footnote{See also \cite{CBO2011}.}

The theoretical issue of income and wealth distributions is well-known since classic economic theorists in the XIX century at least, when the leading economist J.S. \cite{Mill1861} considered ``fair and reasonable that the general policy of the State should favour the diffusion rather than the concentration of wealth.'' At the beginning of the XX century, the leading economist and sociologist V. Pareto argued for the so-called Pareto (power-law) wealth distribution as an empirical regularity \begin{revs}(\citealt{pareto1895legge,pareto1896,pareto1897aggiunta,pareto1971cours})\end{revs}, while the economic statistician C. Gini developed ingenious statistical measurement techniques to capture this inequality through the so-called Gini index (\citealt{gini1912variabilita}). \begin{revs}\footnote{Literature on the Pareto (power-law) distribution of wealth is too vast to be summarized here and outside the purpose and scope of this article, which is not concerned with the statistical form of wealth distribution. Further readings include: \cite{kirman1987pareto,them1990income,druagulescu2001exponential,persky1992retrospectives}} \end{revs}

Recent advances in econophysics point to the functional forms of statistical distributions of income and wealth (\citealt{lux2005emergent}). In particular, some scholars aim to reproduce empirical regularities through simple, elegant additive economic processes (\citealt{angle2006inequality,richmond2001power,solomon2002stable}).
 Other scholars purport to explain the fat tail of these distributions (that is, the tail concerned with the higher ranges of aggregate income and wealth) through multiplicative economic processes which lead to emerging power-laws (\citealt{levy2005market,levy2003investment,Milakovic2003towards}). Some recent contributions suggest the form of a {\it deformed} exponential function, which seems to capture well the empirical regularities of income distribution at the low-middle range, as well as its power-law tail  (\citealt{kaniadakis2001non,kaniadakis2002statistical}). These modelling attempts have raised a lively debate with some economists who were worrying about allegedly poor socioeconomic understanding and lack of theoretical economic underpinnings (\citealt{gallegati2006worrying,lux2005emergent}). Further collaborative and interdisciplinary research has developed the application of the k-deformed exponential function to the parametric modelling of personal income and wealth distributions (\citealt{clementi2015distribution,clementinew,clementi2007kappa,clementi2008kappa,clementi2009kappa,clementi2010model,clementi2012generalized,clementi2012new,clementi2016kappa}). The latter approach provides insights on the drivers of these distributions over time and across the population, while enabling synthetic comparison through inequality and poverty measures that are derived from parametric estimations.

In this context, generalising \cite{champernowne1953model}, \cite{levy2005market} and \citealt{levy2003investment} (Levy et al. thereafter) have developed an elegant modelling strategy purporting to explain the power-law tail of income and wealth distributions under financial market efficiency, and the stochastic distribution of financial returns across individuals active in this market.

In sum, theoretical and societal attention paid to the economic inequality issue raises the question of the cause of this inequality. Whichever tentative response to this question has profound socioeconomic implications, raising further theoretical and applied concerns which go beyond the functional form of statistical distributions of income and wealth across individuals. From this broader perspective, our contribution purports to address two featuring dimensions:
\begin{itemize}
\item[] (i) the inequality of income and wealth allocation across individuals, and its evolution over time;
\item[] (ii) the significance of collective institutional mechanisms including taxation that actively frame and shape this economic process.
 \end{itemize}
In particular, our modelling strategy consists in extending and improving on existing literature by considering these two dimensions. Levy et al. provide a convenient baseline model which subsumes the basic assumptions which characterise widespread economic modelling on these matters.  \cite{fernholz2014instability} and \cite{bertola2006income} review and develop more sophisticated models that maintain similar background assumptions. In this context, Levy et al. have the advantage to reduce the model structure to its minimal, synthetic, and simple formulation. By elaborating on the Levy et al. model, our computational economic analysis will show the relevance of the financial accumulation process that features compound return investment over time. This peculiar accumulation process explains qualitatively both the increasing inequality across individuals, and the decreasing social mobility empirically observed in recent decades. 

The rest of the article is organised as follows. The second section introduces a financial accumulation process model inspired by the Levy et al. model, as baseline scenario. The third section shows the implications of this model for the evolution of inequality and social mobility through time, assessing their sensitivity to changes in variance and non-normal distribution of returns. The fourth section extends the baseline model by introducing decreasing returns and the simple return structure. The comparison with this latter structure corroborates that, without financial accumulation, inequality is not increased over time in the baseline scenario. The fifth section introduces a second flow factor (labour income) along with the stock factor (capital wealth) considered by Levy et al. The introduction of a flow factor may involve an income-saving process that complements and integrates the financial accumulation process driven by inherited wealth. All together, the analysis developed in the first five sections makes clear that distributional effects, which depend on aggregate configurations, have been neglected by the received literature. This preliminary conclusion  paves  the  way  to  introducing  minimal  institutions (\`{a} la Shubik) that denote collective mechanisms related to income and wealth distributions. In particular, the sixth section introduces simple centralised modes of taxation, featuring a proportional taxation model (proportional taxation of periodic net income, uniformly redistributed through provision of universal public service), and a progressive taxation model (progressive taxation of periodic net income, redistributed in a regressive way through direct transfers). A summary of main results concludes.

\section{Modelling strategy for the financial accumulation process}

Levy et al. develop a simple model of aggregate economic process based upon one stock factor (wealth) generating a pure compound rate of return $r_{i,t}$ stochastically distributed across individuals and time periods. This model captures the Pareto law shape in the high-wealth (and high-income) range of the aggregate distributions, where  ``changes in wealth are mainly due to financial investment, and are, therefore, typically multiplicative" (\citealt{levy2005market}, p. 105). This modelling strategy is based on a stochastic multiplicative process of wealth accumulation with lower bound on wealth and homogeneous financial investment talent. According to the authors, this framework implies that ``the only reason for inequality is the stochastic process - chance. This implies that there is no differential ability in asset selection or in timing the [financial] market, which is in line with the efficient-market hypothesis. [...] Homogeneous accumulation talent means that all investors draw their returns randomly from the same distribution (the realized return, however, generally differs from one investor to another)" (\citealt{levy2003investment}, p. 709 and 711).\footnote{In fact, \citealt{levy2005market} (chapter, p. 111, footnote 13) concedes that even joint accumulation processes with heterogeneous accumulation talents are asymptotically Paretian, with the faster-increasing multiplicative process dominating the high-range in the long run. \cite{fernholz2014instability} maintain that, in their model, ``luck alone - in the form of high realised investment returns - [...] creates divergent levels of wealth."}
We formalise the Levy et al. model of financial economic process through the familiar structure of compound return. Thus, wealth $W_{t+1}$ of agent $i$ at time $t+1$ is computed as:

\begin{equation}
W_{i,t+1} = (1 + r_{i,t}) W_{i,t} \,\, \text{ for } r\geq-1
\end{equation}

or 

\begin{equation}
W_{i,T}=W_{i,1} \prod_{t=1}^T (1+r_{i,t})
\label{levyprocess}
\end{equation}

where each individual $i$ draws his actual return $r_{i,t}$ at time $t$ from the same statistical distribution defined as follows: $r_{i,t} \sim N(\mu_r, \sigma_r)$ with $\mu_r, \sigma_r > 0$. We take $\sigma_r$ sufficiently large to enable the possibility of financial investment losses. \begin{revs}Contrary to Levy et. al, our model does not include a reflective lower bound on minimum wealth. This lower boundary would introduce an implicitly redistributive process in the baseline scenario, while we prefer restricting this scenario to pure financial accumulation.  Moreover, Levy et al. require the reflective lower bound in order to obtain the power-law distribution of wealth. The statistical form of wealth distribution is outside the purpose and scope of our article, which focuses instead on the relationship between the financial accumulation process, inequality and social mobility. \end{revs}

In the degenerated case with $r$ constant, the Eq. \ref{levyprocess} becomes the classic formula of compound returns over time:
\begin{equation}
W_{T} = W_{1} (1 + r)^T 
\end{equation}
where for $-1\leq r<0 : W_t \rightarrow_t 0$ and for $r>0 : W_t\rightarrow_t + \infty$.

This stylised model does not pretend to reproduce economic reality in its totality. In particular, it does not introduce consumption, overlapping generations, or windfall gains and losses due to wars or accidents. However, it captures one featuring element of the aggregate economic process: financial accumulation opportunities. Compound returns feature financial investment dynamics and related institutions. Financial institutions, such as investment funds, and widespread measures of financial performance are based upon compound return as reference logic. It seems then particularly significant to disentangle and analyse its impact. The aggregate economic process is increasingly managed through corporate forms that live indefinitely and can then go on performing financial accumulation. On the one hand, financial investment is conducted by institutional investors which are driven by, and assessed against, compound return. On the other hand, eventual redistribution of their financial proceeds is often received by corporate recipients that go on reinvesting those proceeds over time, in a self-referential financial accumulation dynamics.

Throughout all our computational analysis, we assume an initial equal distribution of wealth $W_{i,t=1} = 10 \,\, \forall \,\, i$ across all individuals at initial time $t=1$. This implies that inequality depends entirely on the specifications of the economic process. Furthermore, for sake of simulation, we impose the same random seed to all the various sets of simulations proposed in this article. When not mentioned otherwise, we also  define a population of  $N=5000$, and we run every simulation round for $t_{max}=5000$ steps. \footnote{The Matlab code of the simulations can be found at: \url{https://github.com/simonerighi/BiondiRighi2018_JEIC}} Contrary to Levy et al., we allow the theoretical possibility that individual wealth falls to, and remains at zero level. Individual agents take financial investment risk and may occasionally lose all their capital wealth. \footnote{For simulation purpose, we calibrate the parameter space to make this possibility unlikely. In the scenarios presented in this article, despite the high number of iterations, no agent ever loses its wealth completely. Agents experience partial losses (negative returns), but no complete loss of their wealth.}

Our computational economic analysis disentangles two featuring dimensions to be analysed: wealth inequality across individuals, and social mobility relative to wealth dimension.

Wealth inequality is captured through the Gini Index $G_t$ which summarises the relative concentration of wealth across individual at a certain period of time $t$, defined as follows: 
\begin{equation}
G_t = \left[(N+1)-2\left(\frac{\sum_{k=1}^N (N+1-k)w_{k,t}}{\sum_{k=1}^N w_{k,t}}\right)\right]\frac{1}{N-1} \,\,\, \text{with} \,\,\,  0\leq G_t \leq 1 
\label{gini}
\end{equation}
where $N$ is the number of individuals and $w_{k,t} \leq w_{k+1,t}$ denotes the ranked vector of $W_{i,t}$ at time $t$. Accordingly, $G_t \rightarrow 0$ when individual wealths become more equal, while $G_t \rightarrow 1$ when richer individuals tend to acquire a larger share of aggregate wealth. In order to further corroborate the results obtained observing the Gini Index, we also study other measures of inequality such as the Theil index; the absolute and relative share of income by the top 1\% of the population; and the evolution of the proportion of wealth appropriated by different deciles of wealth. All these measures qualitatively confirm the results, and are thus relegated to supplementary material.

Concerning wealth mobility, our Weighted Mobility Index $M_t$ denotes the relative change in wealth position by agent $i$ between two adjacent time periods $t-1$ and $t$. We consider the average of this index across individuals at each period $t$. Weighted Mobility Index $M_t$ is computed as follows:

\begin{equation}
M_t = \frac{1}{N} \sum_{i=1}^N \left[\frac{\vert Dec\left[W_j\right]_{i,t-1}-Dec\left[W_j\right]_{i,t} \vert }{Dec\left[W_{j=1}\right]_{t} -Dec\left[W_{j=10}\right]_{t}}\right]
\label{WeightedMovementIndex}
\end{equation} 

where $Dec\left[W_j\right]_{i,t}$ represents the median wealth at time $t$ for the decile $j$ in which agent $i$ was at time $t$, while $\left[W_{j=1}\right]_{t}$ and $\left[W_{j=10}\right]_{t}$ denote respectively the median wealth for the first and the last decile at time $t$. This index captures the relative movement of the individual $i$ whenever he moves across deciles, relative to the maximum relative wealth distance between the first and the last decile. By taking its mean for each period across the population $N$, we denote the average wealth-weighted individual capacity to move across deciles period after period. Again, in order to corroborate the results obtained with this indicator, we test other measures of mobility. These additional measures confirm the deductions that can be inferred from $M_t$ and are thus relegated to supplementary material.

\section{The baseline case}

The dynamics of wealth concentration across individuals over time is impressive under the baseline scenario introduced by Levy et al. Wealth distributions become increasingly skewed under various compound return structures where individual returns are extracted from normal and gamma distributions at each period of time. For all these structures, the upper tail of wealth distribution goes on appropriating an increasing share of aggregate wealth over time. This dynamic effect has implications for wealth inequality  (Figure \ref{Fig1B}).  In particular, the Gini Index shows that wealth inequality is magnified under the baseline case, asymptotically tending to its maximal value of one. In particular, drawing upon \cite{fernholz2014instability}' proof, we introduce the following Lemma \ref{prop1} concerned with the evolution of time-average wealth distribution:  

\begin{lemma}
The asymptotic value of a Gini Index based upon time-average wealth tends almost surely to its maximum value of one.
\label{prop1}
\end{lemma}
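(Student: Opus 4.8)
The plan is to reduce the statement to a pure concentration property of the time-averaged wealth vector and then to track that property through the logarithm, where the dynamics becomes a system of coupled random walks.

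First I would record the elementary fact, read off from the Gini formula in Eq.~\ref{gini}, that for a fixed population $N$ the index attains its maximum precisely when a single coordinate carries the whole mass. Writing the sorted time-averaged wealths as $w_1\le\cdots\le w_N$ and noting that the smallest rank weight in the numerator is $1$ while all others are at least $2$, one obtains
\begin{equation}
\frac{\sum_{k=1}^N(N+1-k)\,w_k}{\sum_{k=1}^N w_k}\;\ge\;1+\frac{\sum_{k<N}w_k}{\sum_{k=1}^N w_k},
\end{equation}
so $G_T\to1$ if and only if the share $w_N/\sum_k w_k$ of the richest agent tends to $1$. Since the Gini index is scale invariant, the common factor $1/T$ in the \emph{time average} $\bar W_{i,T}=\tfrac1T\sum_{t=1}^T W_{i,t}$ is irrelevant, and it suffices to analyse the cumulative sums $S_{i,T}:=\sum_{t=1}^T W_{i,t}$; the target becomes
\begin{equation}
\frac{\max_i S_{i,T}}{\sum_{i=1}^N S_{i,T}}\;\longrightarrow\;1\qquad\text{almost surely.}
\end{equation}

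Second, I would pass to logarithms. Setting $X_{i,t}:=\log(1+r_{i,t})$, these variables are i.i.d.\ across $i$ and $t$ with mean $\mu:=\mathbb{E}[\log(1+r)]$ and variance $s^2:=\mathrm{Var}[\log(1+r)]>0$ (finite and strictly positive for the calibrated parameters, for which total ruin never occurs), and $\log W_{i,t}=\log 10+\sum_{u=1}^{t-1}X_{i,u}$. The between-agent log-wealth differences $\Delta^{ij}_t:=\log W_{i,t}-\log W_{j,t}$ are therefore driftless random walks with step variance $2s^2$. To move from the cumulative sum to this walk I would sandwich $S_{i,T}$ by its running maximum $W_i^{*}:=\max_{t\le T}W_{i,t}$, namely $W_i^{*}\le S_{i,T}\le T\,W_i^{*}$, which gives $\log S_{i,T}=\max_{t\le T}\log W_{i,t}+O(\log T)$. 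Thus the comparison of the time-averaged wealths of any two agents reduces, up to an additive $O(\log T)$ slack, to the comparison of the running maxima of their log-wealth walks. With the heavy-term weighting of the cumulative sum, one then argues that the between-agent spread of the running maxima outgrows this slack: the law of the iterated logarithm applied to the centred walks $\sum_{u<T}(X_{i,u}-\mu)$ yields fluctuations of order $\sqrt{T\log\log T}\gg\log T$, so that on exponentiating, $\max_i S_{i,T}$ comes to dwarf the sum of the remaining $S_{i,T}$ and the richest share is pushed toward $1$. Here I would lean on \citet{fernholz2014instability}: the same positive idiosyncratic variance that drives the multiplicative dispersion forces the relative wealth distribution to be unstable, concentrating asymptotically on a single agent.

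The hard part is precisely this last upgrade. Because the driving walks $\Delta^{ij}_t$ are driftless and hence oscillatory — the identity of the leading agent changes infinitely often, and at the crossing epochs two agents are momentarily comparable — the concentration is not monotone, and one cannot deduce from the bare divergence of the spread that the instantaneous share converges; indeed the instantaneous Gini does not converge to $1$. The role of the time average (equivalently, of the cumulative sum with its built-in memory) is to integrate over these recurrent excursions, and the delicate step is to quantify that the windows of near-equality are asymptotically negligible for the $S_{i,T}$, so that the almost-sure statement survives the oscillations. This is exactly the instability mechanism of \citet{fernholz2014instability}, whose argument I would transcribe into the present discrete, finite-population setting. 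A minor subsidiary point is to confirm the integrability of $X=\log(1+r)$ near $r=-1$, which the calibration (losses rare, no complete wipeout) guarantees, legitimising the central limit theorem and law of the iterated logarithm invoked above.
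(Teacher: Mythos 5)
Your opening reduction (Gini $\to 1$ for fixed $N$ iff the top share $\to 1$) and the integrability remark are fine, but the proof fails at the point you yourself flag as ``the hard part,'' and it fails for a structural reason: you have translated ``Gini Index based upon time-average wealth'' into the almost-sure convergence of $\max_i S_{i,T}/\sum_i S_{i,T}$ with $S_{i,T}=\sum_{t\le T}W_{i,t}$, i.e.\ the share of time-averaged \emph{wealth}, whereas the result the paper actually proves (by reduction to Theorem 2 of \cite{fernholz2014instability}) is the Ces\`aro time average of the \emph{instantaneous} maximal share, $\lim_{T\to\infty}\frac1T\int_0^T\theta_{\max}(t)\,dt=1$ a.s. These are different objects, and your version is false in this model. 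With the paper's calibration $\mathbb{E}\log(1+r)>0$, the cumulative sum is recency-dominated: $\log S_{i,T}=\log W_{i,T}+\log Z_{i,T}$, where $Z_{i,T}=\sum_{k\ge 0}e^{-(\log W_{i,T}-\log W_{i,T-k})}$ is a tight, perpetuity-type factor. Hence $\log S_{1,T}-\log S_{2,T}$ differs from the driftless recurrent walk $\log W_{1,T}-\log W_{2,T}$ only by an $O_p(1)$ term, and along the (a.s.\ infinite) sequence of epochs at which that walk returns to a bounded interval, the top two cumulative sums are within a constant factor of each other, so the top share of $S$ is bounded away from one infinitely often. The time average of wealth does \emph{not} smooth out the recurrent lead changes, precisely because exponential growth weights the integral toward the present. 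Your two technical tools cannot close this: the sandwich $W_i^*\le S_{i,T}\le T\,W_i^*$ leaves an $O(\log T)$ slack that is useless at crossing epochs where the true gap is $O(1)$, and the LIL comparison only shows that gaps are of order $\sqrt{T}$ at \emph{typical} times, which yields convergence in probability of the instantaneous share, never the almost-sure statement. Deferring to the ``instability mechanism'' of \cite{fernholz2014instability} does not help, because their Theorem 2 is formulated exactly as a Ces\`aro average of the share process for this very reason: near-ties recur a.s., but the fraction of time spent near them vanishes.

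By contrast, the paper's own proof does essentially none of the work you attempt: it observes that the discrete process $W_{i,T}=W_{i,1}\prod_{t}(1+r_{i,t})$ becomes, under $R(t)=\ln(1+r_{i,t})$ and an infinitesimal time step, the continuously compounded model of \cite{fernholz2014instability} driven by Brownian motion, and then invokes their Theorem 2 verbatim, reading the lemma's ``Gini based upon time-average wealth'' as that theorem's time-averaged maximal share. To repair your argument you should either (i) prove the Ces\`aro statement directly in discrete time, e.g.\ by showing the occupation time of the near-tie set $\{\,|\log W_{i,t}-\log W_{j,t}|\le K\,\}$ is $o(T)$ a.s.\ for every $K$ (an ergodic/occupation-time estimate for recurrent walks, which is the real content of the Fernholz--Fernholz proof), or (ii) weaken your conclusion to convergence in probability of the instantaneous top share, which is what your LIL spread argument genuinely delivers. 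As written, the displayed reduction to $\max_i S_{i,T}/\sum_i S_{i,T}\to 1$ a.s.\ aims at a false target.
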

\begin{proof}
See Appendix \ref{tendtoone}.
\end{proof}

\begin{revs}Following \cite{biondi2018financial}, we further introduce the following lemma concerned with the Gini Index across the population at each point of time:\end{revs}

\begin{revs}
\begin{lemma}
The asymptotic value of the Gini Index $G_t$ on the entire population at a certain point of time $t$ asymptotically tends almost surely to its maximum value of one.
\label{prop2}
\end{lemma}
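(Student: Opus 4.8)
The plan is to reduce the statement to a sharp control of the single largest wealth share, and then to translate the multiplicative dynamics into an additive random-walk problem. First I would rewrite the Gini coefficient in terms of the ranked wealth shares $p_{k,t}=w_{k,t}/\sum_{j}w_{j,t}$, for which a short rearrangement of Eq.~(\ref{gini}) gives $G_t=(2\sum_{k}k\,p_{k,t}-(N+1))/(N-1)$. From this one reads off the two-sided bound
\[ \tfrac{2}{N-1}\,(1-p_{N,t}) \;\le\; 1-G_t \;\le\; 2\,(1-p_{N,t}), \]
where $p_{N,t}$ is the share of the richest agent. Hence $G_t\to1$ almost surely if and only if the top share $p_{N,t}\to1$ almost surely, and it suffices to prove the latter.

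Second, I would pass to logarithms. Writing $S_{i,t}=\sum_{s=1}^{t-1}\log(1+r_{i,s})$, Eq.~(\ref{levyprocess}) gives $W_{i,t}=W_{i,1}\,e^{S_{i,t}}$, so the $N$ log-wealth trajectories are independent random walks with a common increment law: each increment $\log(1+r_{i,s})$ shares the same finite mean $m$ and variance $v>0$. Ranking the $S_{i,t}$ as $S_{(1),t}\le\cdots\le S_{(N),t}$ and bounding the sum of the non-maximal wealths by the second largest, one gets $1-p_{N,t}\le(N-1)\,e^{-\Delta_t}$ with $\Delta_t:=S_{(N),t}-S_{(N-1),t}$. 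Combining with the previous step, $1-G_t\le 2(N-1)\,e^{-\Delta_t}$, so the whole statement reduces to showing that the top gap $\Delta_t$ of the $N$ random walks diverges almost surely.

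Third, for the divergence of $\Delta_t$ I would exploit that the common drift $m$ cancels in the difference $S_{(N),t}-S_{(N-1),t}$, so $\Delta_t$ is the top spacing of $N$ centred i.i.d.\ partial sums whose dispersion grows like $\sqrt{tv}$. At the distributional level this is precisely the mechanism behind the claim: the cross-sectional variance of $\log W_{\cdot,t}$ grows linearly in $t$, the empirical log-wealth law approaches a Gaussian of exploding scale, and the Gini of an ever more dispersed (log-normal) law, $2\Phi(\sigma/\sqrt{2})-1$, tends to one as $\sigma\to\infty$. To turn this into an almost-sure statement I would invoke Lemma~\ref{prop1} together with the rank-based concentration argument of \cite{fernholz2014instability}: the time-averaged top share already tends almost surely to one, and the same coupling of order-statistic spacings should control the instantaneous gap, forcing $\Delta_t\to+\infty$ and hence $p_{N,t}\to1$ almost surely.

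The main obstacle is precisely this last upgrade from typical to almost-sure behaviour. An argument based on a single fixed pair of agents cannot suffice, because each pairwise difference $S_{i,t}-S_{j,t}$ is a mean-zero, hence recurrent, random walk: any two prescribed agents return arbitrarily close to one another infinitely often. The delicate question is whether such near-ties can persistently afflict the two top-ranked agents --- which would pull $\Delta_t$ back toward zero along a subsequence and obstruct $p_{N,t}\to1$ --- or whether the growing $\sqrt{t}$ dispersion of the order statistics keeps the leading pair increasingly separated. Resolving this in favour of the latter is the heart of the matter; I would do so by feeding a law-of-the-iterated-logarithm and maximal-inequality control of the recurrent fluctuations into the rank-based concentration estimates of \cite{fernholz2014instability} and Lemma~\ref{prop1}, which is the route developed in \cite{biondi2018financial} and which I would cite rather than re-derive in full.
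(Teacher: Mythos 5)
Your first two reductions are correct and worth keeping: the rearrangement $G_t=\bigl(2\sum_k k\,p_{k,t}-(N+1)\bigr)/(N-1)$, the two-sided bound $\tfrac{2}{N-1}(1-p_{N,t})\le 1-G_t\le 2(1-p_{N,t})$, and the estimate $1-p_{N,t}\le (N-1)e^{-\Delta_t}$ all check out. But note first that the paper offers no in-house argument to compare against: its entire proof of this lemma is the citation to Lemma 3.1 of \cite{biondi2018financial}, so the only question is whether your own program for the crux can close --- and it cannot. Because your sandwich is two-sided, your plan commits you to proving $\Delta_t\to\infty$ almost surely at fixed $N$, and that event in fact has probability zero. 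For $N=2$ this is immediate: $\Delta_t=\lvert S_{1,t}-S_{2,t}\rvert$ is the modulus of a mean-zero, finite-variance random walk, which is recurrent (Chung--Fuchs), so $\liminf_t\Delta_t<\infty$ a.s.\ and $G_t$ returns near zero infinitely often. For general fixed $N$, nothing in your argument exploits $N\ge 3$ to evade this, and indeed it cannot: on the event $\{\Delta_t\to\infty\}$, a lead change at time $t$ would require a single-period increment gap $X_{j,t}-X_{i,t}\ge\Delta_{t-1}$, so with light-tailed i.i.d.\ increments a conditional Borel--Cantelli argument forces an eventual \emph{permanent} leader on that event; but every pairwise difference walk satisfies $\limsup_t (S_{j,t}-S_{i,t})=+\infty$ a.s., so every agent is overtaken infinitely often, a contradiction. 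This is exactly the obstruction you flagged in your fourth step, but your hoped-for resolution is structurally unavailable: the law of the iterated logarithm and maximal inequalities control $\limsup$ fluctuations, whereas the failure here is a $\liminf$/recurrence phenomenon afflicting whichever pair currently leads. Nor can Lemma \ref{prop1} be ``fed in'': it concerns \emph{time-averaged} wealth precisely because the instantaneous top share oscillates --- the paper's own appendix quotes the \cite{fernholz2014instability} caveat that ``it is not the same household'' that holds the lead over time.

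What survives of your note is the distributional aside in your third step, and it is the sound mechanism: cross-sectional log-wealth disperses like $\sqrt{vt}$, and the Gini of a log-normal law, $2\Phi\bigl(\sigma\sqrt{t}/\sqrt{2}\bigr)-1$, tends to one. That argument delivers $G_t\to 1$ \emph{in probability} at fixed $N$ (for every $M$, $P(\Delta_t>M)\to 1$), or an almost-sure/deterministic statement once one either time-averages (as in Lemma \ref{prop1}) or lets the population grow so that the empirical Gini tracks the ensemble Gini before sending $t\to\infty$; some such formalization is the only reading under which the delegated proof in \cite{biondi2018financial} can operate, since the instantaneous fixed-$N$ almost-sure claim is falsified by the recurrence argument above. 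So the ``upgrade from typical to almost-sure behaviour'' you defer is not a technical loose end to be discharged by citation --- it is the precise point where the claim, in the formulation your proof adopts, fails, and a correct write-up must instead weaken the mode of convergence or change the limiting regime. (A minor further point for any formal version: under $r_{i,t}\sim N(\mu_r,\sigma_r)$ the event $r_{i,t}\le -1$ has positive probability, so $\log(1+r_{i,t})$ is not a.s.\ well defined; the paper handles this only by calibration, and a proof must truncate or condition on survival.)
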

\begin{proof}
See Proof of Lemma 3.1 in \cite{biondi2018financial}.
\end{proof}
\end{revs}

\begin{figure}[!ht]
\centering
\includegraphics[width=0.48\textwidth]{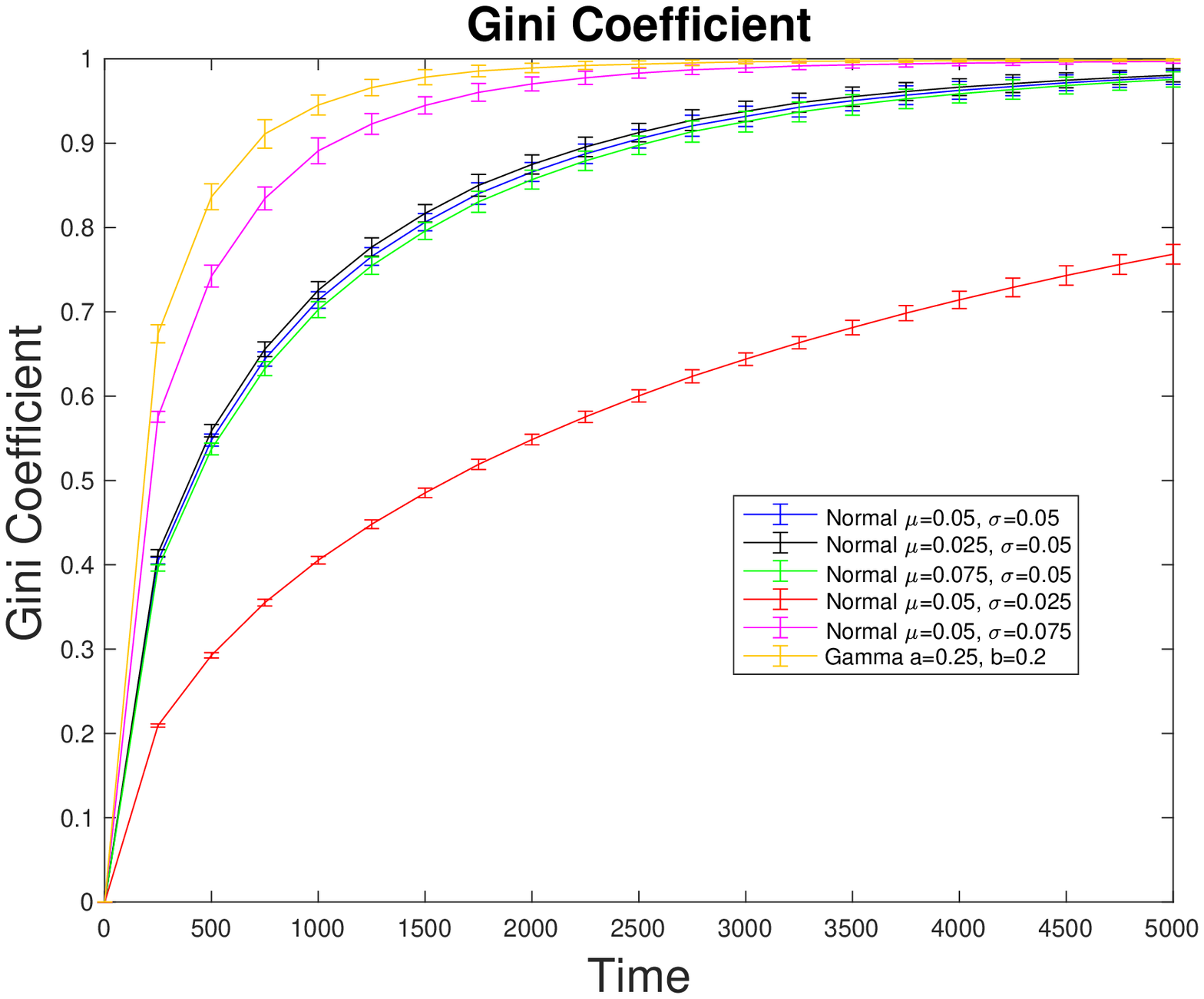}
\includegraphics[width=0.48\textwidth]{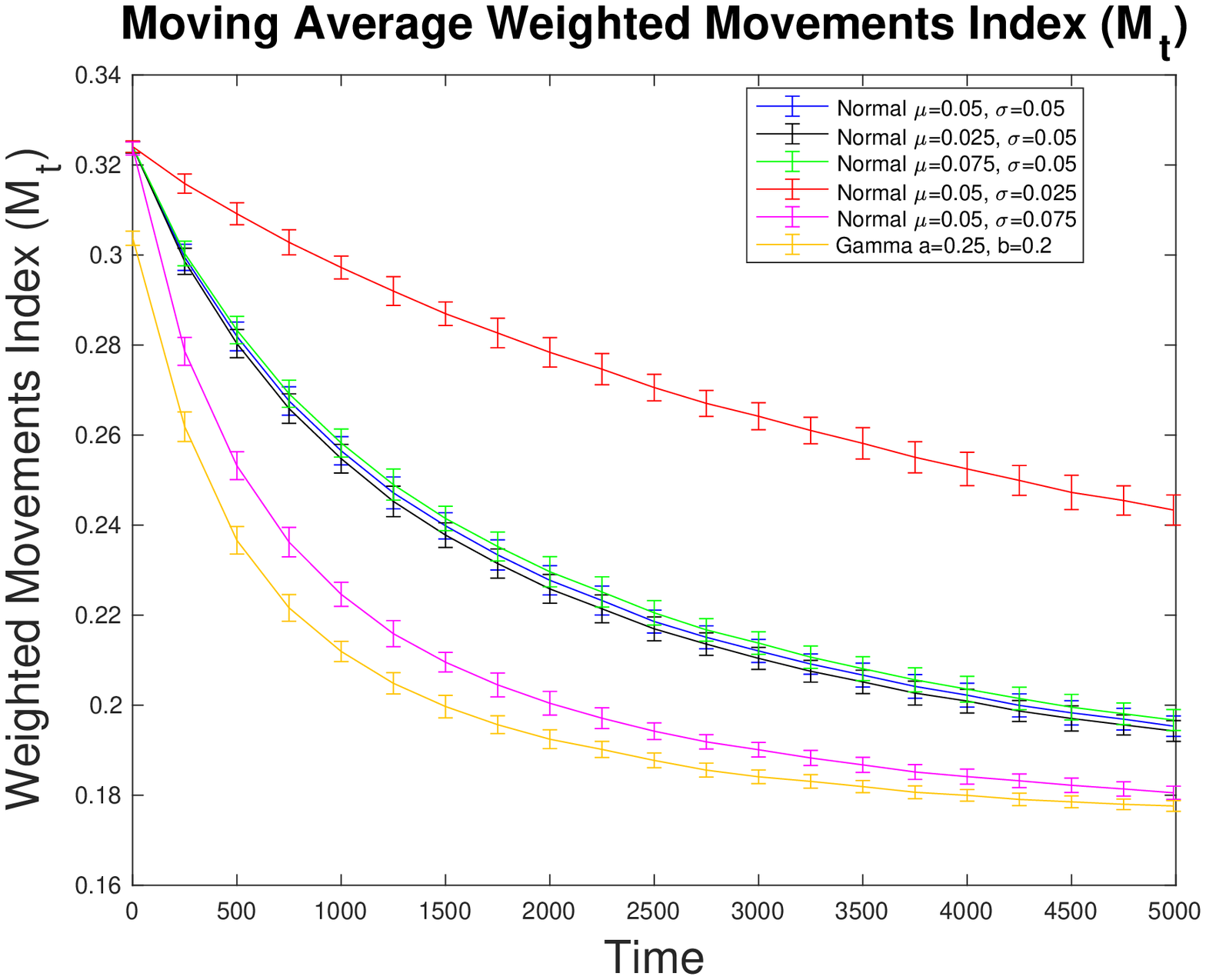}
\caption{Left Panel: Gini Index (Equation \ref{gini}) over time periods for wealth distribution of 5000 individuals under various return structures:  $r_{i,t} \sim N(\mu_r,\sigma_r)$ and $gamma(a,b)$. The Gini Index asymptotically tends to one (see Appendix \ref{tendtoone}). Right Panel: Moving average over ten periods of Weighted Mobility Index defined in Equation \ref{WeightedMovementIndex}. The index is computed under various return structures :  $r_{i,t} \sim N(\mu_r,\sigma_r)$ and $gamma(a,b)$. Averages and standard deviations are computed out of 100 simulations. In all simulations initial wealth $W_{i,t=0}=10 \,\,\, \forall i$}
\label{Fig1B} 
\end{figure}

Furthermore, as shown more systematically by Figure \ref{Fig2B}, the Gini Index is increasing in the variance of the underlying return structure. This concentration effect could be counterbalanced by social mobility relative to wealth, involving the actual capacity of individuals to move across the wealth distribution through time. However, our simulations show that this is not the case for the baseline scenario. Indeed, the Wealth Mobility Index $M_t$ shows that wealth mobility is rapidly decreasing both over time (Figure \ref{Fig1B}, Right Panel) and in the variance of returns (Figure \ref{Fig2B}, Right Panel).

\begin{figure}[!ht]
\centering
\includegraphics[width=0.48\textwidth]{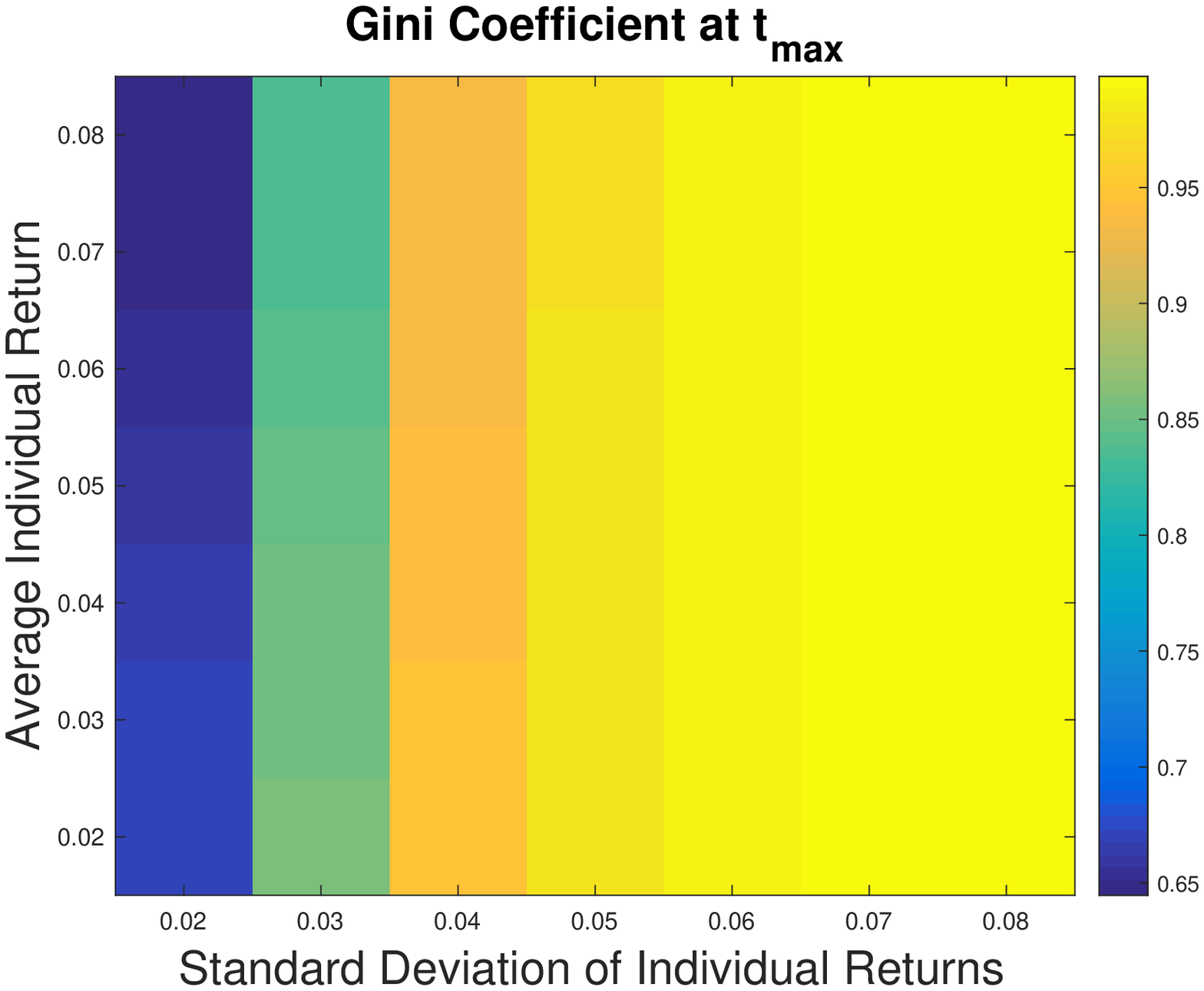}
\includegraphics[width=0.48\textwidth]{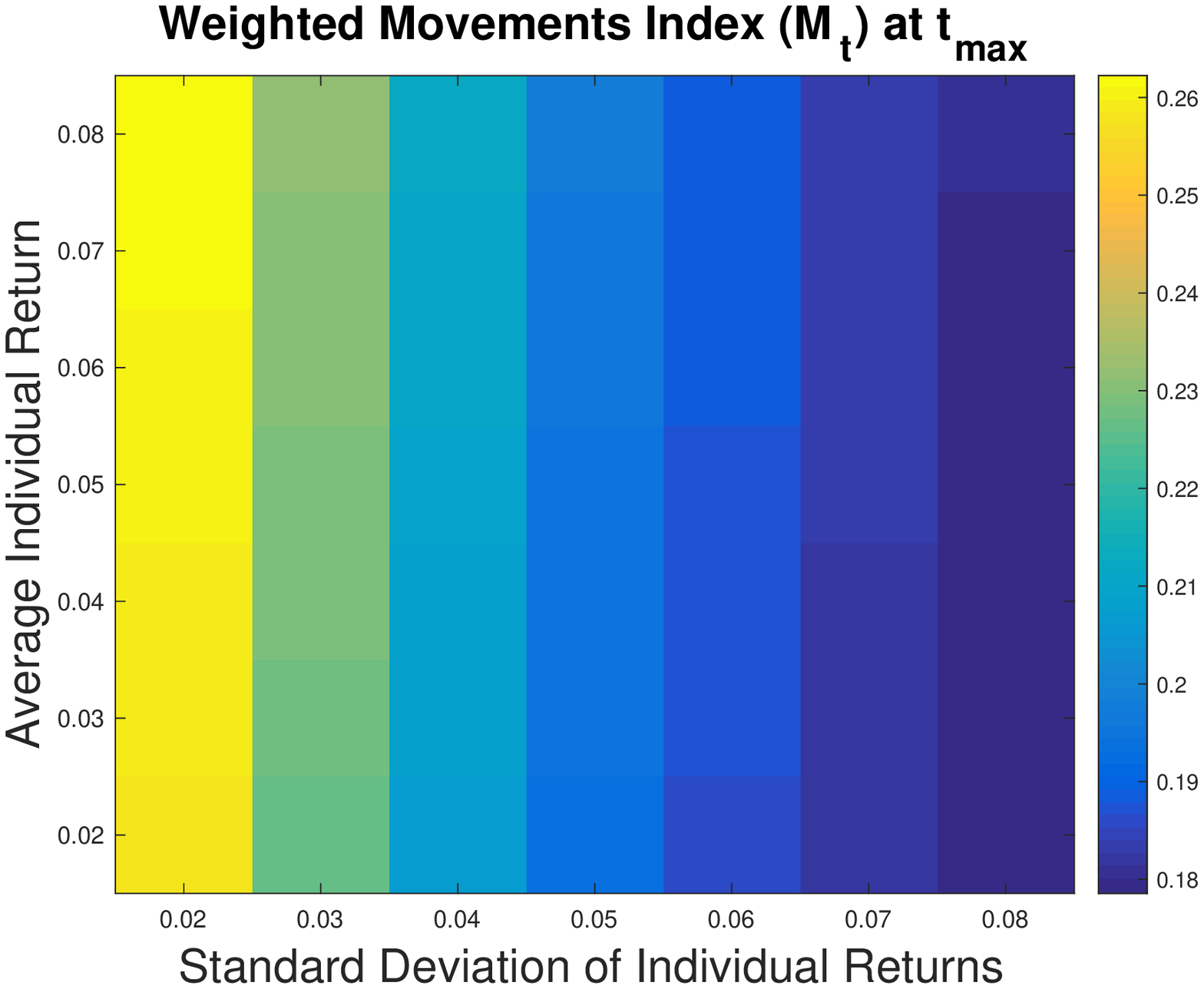}
\caption{Left Panel: Gini Index $G_t$ (Equation \ref{gini}) value at time $t_{max}=5000$ under baseline return structure $r_{i,t} \sim N(\mu_r,\sigma_r)$. Right Panel: Weighted Mobility Index value at time $t_{max}=5000$, as defined in Equation \ref{WeightedMovementIndex}. The index is computed under the baseline return structure :  $r_{i,t} \sim N(\mu_r,\sigma_r)$. Both panels: Values taken by $\mu_r$ are represented on the vertical axis, values taken by $\sigma_r$ are represented on the horizontal axis. In all simulations, initial wealth $W_{i,t=0}=10 \,\,\, \forall i$. 
}
\label{Fig2B} 
\end{figure}

Our result on wealth mobility is further reinforced observing the mobility of the 1\% richest individuals at different points of time. This experiment is visualized in Figure \ref{Fig8}. Having ranked all the agents according to their wealth at one period $t$, the top 1\% is selected. Ranks are normalized by dividing values for the total number of agents so to establish a measure that is independent of population size. In the Left Panel (Figure \ref{Fig8}), we compute the average position of each selected agent (that is, those included in the top 1\% at period $t$) over the following 1000 periods. We then show the distribution of probability of this average rank position. \footnote{The time average of rank positions reduces the impact of idiosyncratic oscillations. Similar results are obtained by replacing the average rank positions with the individual positions after 1000 periods.} The farther is the time period $t$ at which the selection of the top $1\%$ is made, the lower is the average rank position for those top individuals. This implies a decreasing downward mobility and increasing persistence of the top 1\% over time. Moreover,  Figure \ref{Fig8} (Right Panel) shows the evolution of the average rank of individuals in the top 1\% at time $t$ over the next 1000 periods. As time passes, richest individuals at a given period $t$ tend to remain among the richest. Indeed, while individuals that are among the richest at    $t = 10, 100$ may yet revert to some lower position over time (being replaced by previously poorer individuals), individuals that are rich at $t = 1000, 2000$ tend to remain in the top decile of the wealth distribution. Finally, individuals that are rich at $t = 3000, 4000$ tend to remain in the top centile of wealth over time, thus perpetuating their social position relative to wealth.

Both results for wealth inequality and wealth mobility depend especially on returns variance (Figure \ref{Fig2B}). Coeteris paribus, Gini Index is increasing and Wealth Mobility Index is decreasing in return variance  $\sigma_r$ under normally distributed return structures (Figure \ref{Fig2A}, Left Panel). Interestingly, under our baseline case, increased wealth inequality does not depend on mis-alignement between average individual financial returns and aggregate growth. As expected (Figure \ref{Fig2A}, Right Panel), average aggregate growth remains in line with average individual return in our baseline case.

\begin{figure}[!ht]
\centering
\includegraphics[width=0.48\textwidth]{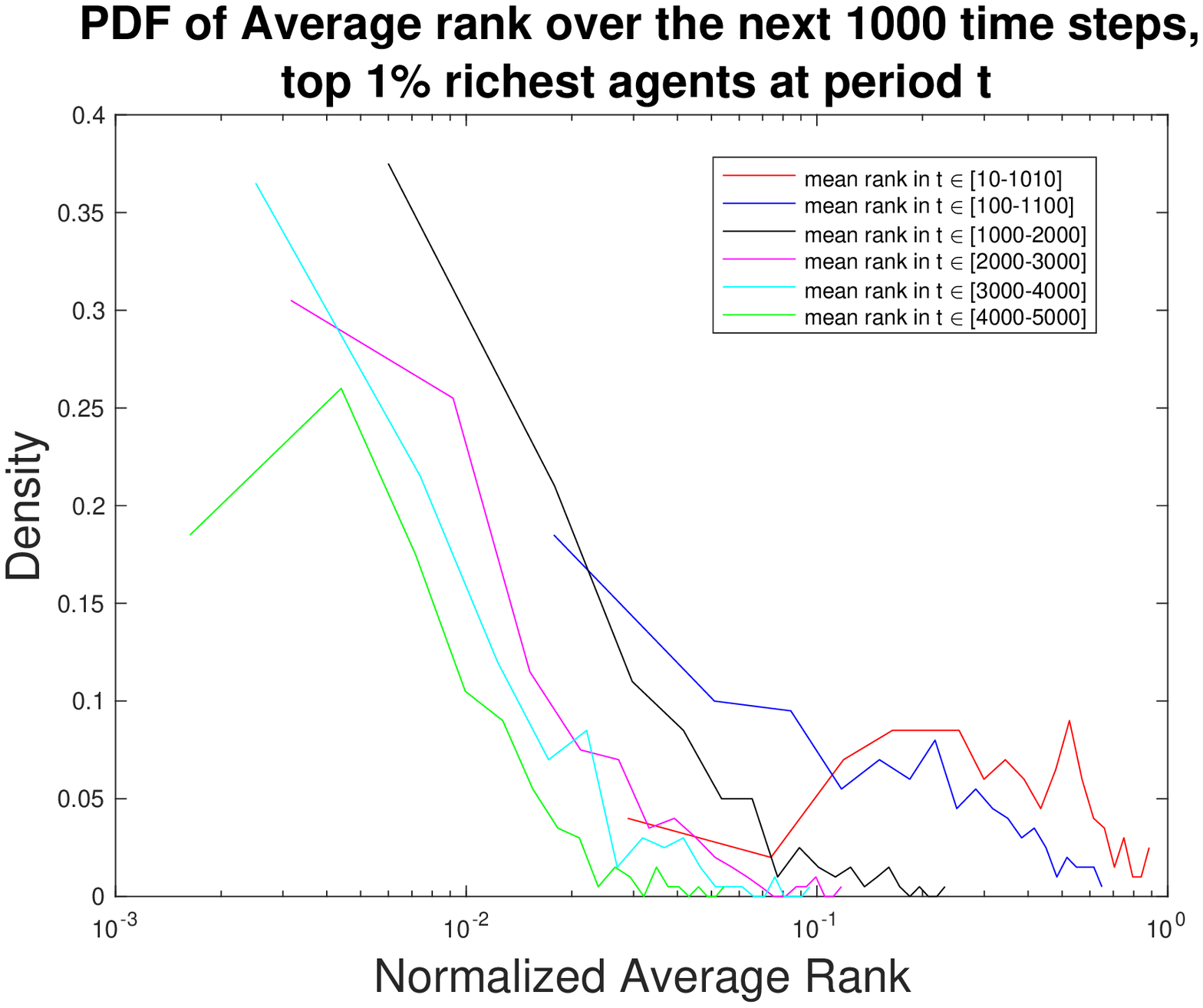}
\includegraphics[width=0.48\textwidth]{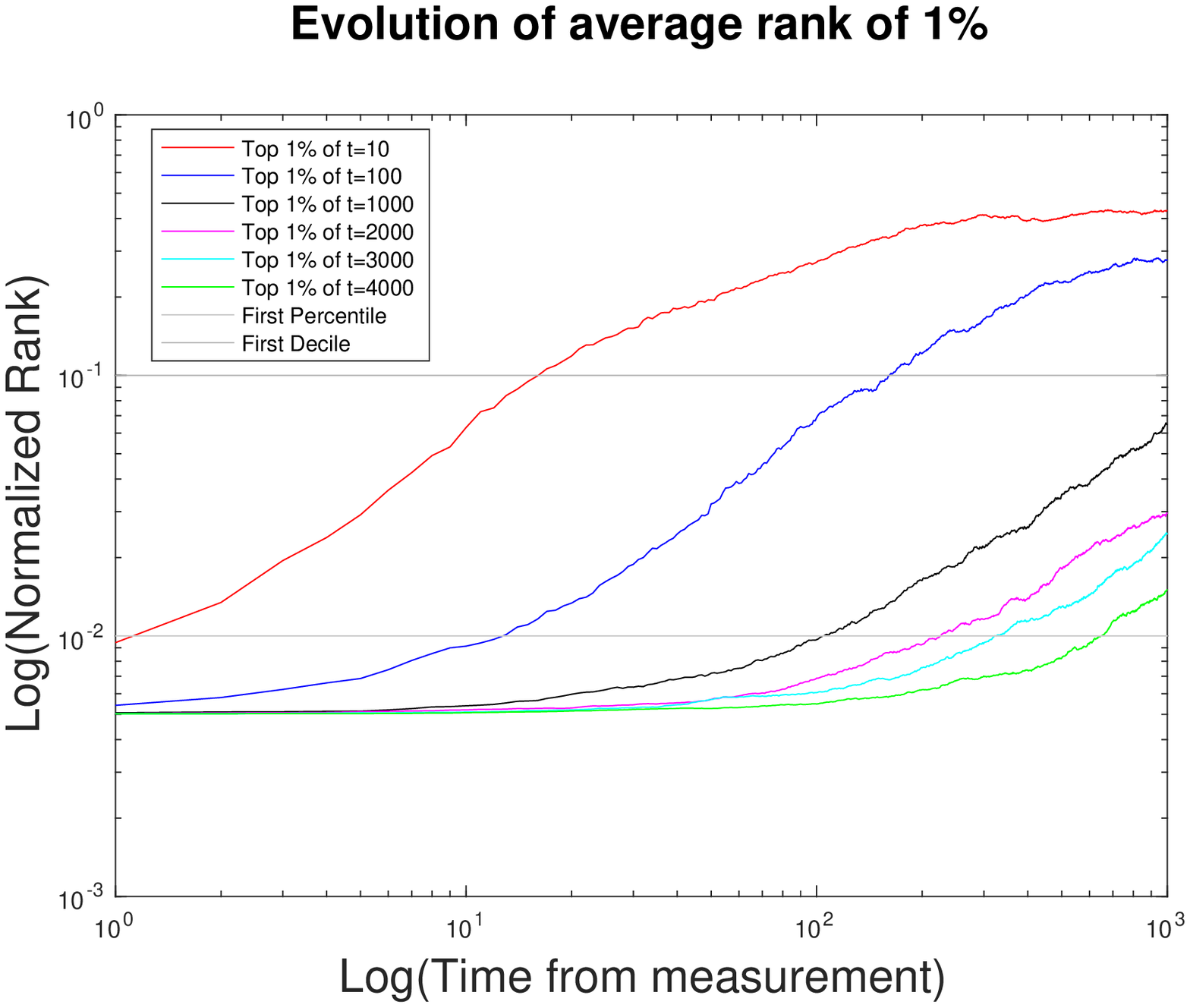}
\caption{This computational analysis captures the chance for agents that are the richest at a certain point of time to remain among the richest throughout further periods. Indicators compare their relative position in the wealth distribution across periods of reference. Left Panel: Distribution of average normalized ranking of agents belonging to the top 1\% at certain periods of time. Right Panel: Evolution of average normalized ranking of agents belonging to the top 1\% at certain periods of time as function of time since that period. For both panels $N=20000$. The simulations are computed under the baseline return structure with $r_{i,t} \sim N(0.05,0.05)$, while initial wealth $W_{i,t=0}=10 \,\,\, \forall i$.}
\label{Fig8} 
\end{figure}

\begin{figure}[!ht]
\centering
\includegraphics[width=0.60\textwidth]{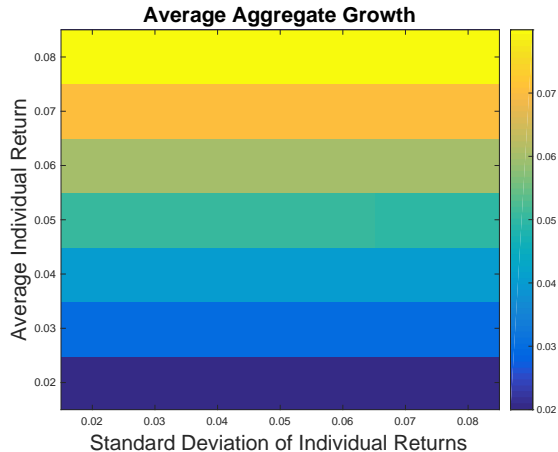}
\caption{Relationship between average aggregate growth and conditions of  individual returns. Aggregate Growth is defined by Equation \ref{Growth} while mean individual return is $\mu_r$  and the standard deviation of individual returns is $\sigma_r$. By assumption, returns are dispersed according to the normal distribution $r_{i,t} \sim N(\mu_r,\sigma_r)$, in the baseline case, while initial wealth $W_{i,t=0}=10 \,\,\, \forall i$. Values taken by $\mu_r\in\left[0.02;0.08\right]$ (with incremental steps of $0.01$) are represented on the vertical axis, while values taken by $\sigma_r\in\left[0.02;0.08\right]$ (with incremental steps of $0.01$) are represented on the horizontal axis.}
\label{Fig2A}
\end{figure}

This sensitivity to return variance in the baseline case is magnified by financial market dynamics. Empirical evidence for financial markets behaviour shows that actual market price and return series are not normally distributed, featuring fat tails and extreme events (\citealt{mandelbrot1963variation,mandelbrot1967variation,mantegna1996introduction,biondirighi2013,biondi2017much}, providing further references). For sake of simulation, we complement normal distribution of returns with a gamma distribution having the same mean as in the baseline case but featuring extreme events, i.e., gamma$(a,b)$ with $a=0.25$ and $b=0.2$ where $a \cdot b=\mu_r=0.05$ and $a \cdot b^2=\sigma_r=0.01$. This parameterization for the gamma distribution is applied to all our simulation analyses, when not stated otherwise. Computational results show that the gamma distribution of returns reinforces wealth concentration and inequality, while undermining wealth mobility over time. In particular, the Gini Index is always superior at each period of time (Figure \ref{Fig2B}, Left Panel), while the Wealth Mobility Index is always inferior (Figure \ref{Fig2B}, Right Panel). This result foreshadows that the non-normality of financial market returns may have an inequality-enhancing impact, favouring skewed accumulation of wealth across individuals and over time.

\section{Decreasing compound returns and simple return structures: History matters}

Levy et al. insist on the stochastic nature of their financial investment process. Our computational economic analysis further points to its cumulative nature over time, depending on the peculiar deployment of compound returns. Along with stochastic extraction of the actual return $r$ for investor $i$ at each time $t$, the financial investment process is further featured by the cumulative impact of the individual series of compound returns on accumulated wealth through historical time (Figure  \ref{Fig1B}). A quick glance at the deterministic reduction of the process model in Equation \ref{levyprocess}  shows that being richer at time t almost assures becoming richer at a further time $t + n$ with $n >> 0$. Coeteris paribus, this evolutionary structure tends to favour investors that become richer earlier in time, that is, investors that accumulate net gains before (and net losses after) the others, since every gain compounds positively, while every loss compounds negatively through time. This cumulative process is exacerbated by the constant mean return to wealth which was assumed in the baseline scenario.  Therefore, rather than `being lucky', this process denotes that `history matters'. This financial accumulation process has important implications for the evolution of wealth through socioeconomic space and time. In a similar vein, \cite{keynes1933economic} would ``trace the beginnings of British foreign investment to the treasure which Drake stole from Spain in 1580", reinvested at annual compound return of 3.25\% over the next centuries to 1930, while remembering its connection to ``avarice and usury and precaution that must be our gods for a little longer still [... to] lead us out of the tunnel of economic necessity into daylight". Hysteresis and path dependency play an important role in explaining the inequality generated by the financial accumulation process. Wealth concentration is ever-increasing over time (Figure \ref{Fig1B}), while relative social mobility is undermined by increasing differences in total wealth\footnote{\cite[p. 7]{levy2003investment} prove that ``the actual wealth distribution converges to a Pareto distribution $\left[\dots\right]$ with minimum wealth, average wealth, and variance that grow over time" when a lower bound on minimum wealth is introduced. Without the latter, the distribution converges to a non-stationary log-normal distribution.}, as showed by the Wealth Mobility Index (Figure \ref{Fig2B}). In sum, idiosyncratic compound returns on investment through time prove to have cumulative effects, making the aggregate distribution of wealth not stationary. In particular, this distribution becomes increasingly right-skewed over time, tending to a limit in which wealth is concentrated entirely at the top (see also \citealt{fernholz2014instability,biondi2018financial}). This right-skewed tail of wealth distribution depends especially on the second order of return distributions at a certain time period $t$ (that is, the variance $\sigma_r$ in case of returns that are normally distributed). 

This section further assesses the relationship between wealth inequality and mobility and the evolution of returns through time. A first extension consists in exploring alternative economic processes characterised by decreasing returns to wealth. For simulation purposes, decreasing returns to aggregate wealth can be introduced by imposing an external constraint on all the returns  $r_{i,t}$ as follows:
\begin{equation}
(1+r_{i,t}) = \frac{1+r_{i,t}}{\log(1+TW_t)} \,\,\,\, \forall t>1
\label{reduced}
\end{equation}
where 
\begin{equation}
TW_t=\sum_{i=1}^N W_{i,t}
\label{tw10}
\end{equation}
Accordingly, all actual returns $r_{i,t}$ $\forall i,t$ decrease in proportion to aggregate wealth, which is positive and increasing on average over time by assumption. Possible and actual net gains and losses are then progressively reduced over time. Under our baseline assumptions, they tend to zero in the long run, that is, $1+r_{i,t} \rightarrow 1$ for $t\rightarrow +\infty$. 
Decreasing returns are relevant here to test the sensitivity of wealth inequality and mobility to time. In addition, positive constant returns are not a reasonable hypothesis asymptotically, where constraints on growth, as well as limits in natural and human resources, typically appear. Decreasing returns may also denote the case of absent or decaying technological development.

Computational results show that wealth inequality is materially reduced, while wealth mobility is improved under decreasing returns. In particular, Gini Index is consistently lower across time, while Weighted Mobility Index, although decreasing, remains asymptotically superior over time, relative to the baseline case (Figure \ref{Fig3B}). Therefore, individuals are always less unequal and more able to move across wealth relative levels in this scenario, relative to the baseline case. Decreasing returns over time progressively reduce the opportunity by individuals to gain or lose from their wealth investment and accumulation. Therefore, individuals do not have sufficient occasions to accumulate wealth over time, both in absolute and relative terms. Decreasing returns reshape both the first and the second order of return distribution, reducing both the total wealth and its dispersion across individuals over time. This result shows that wealth inequality and mobility depend on temporal evolution of returns.

\begin{figure}[!ht]
\centering
\includegraphics[width=0.48\textwidth]{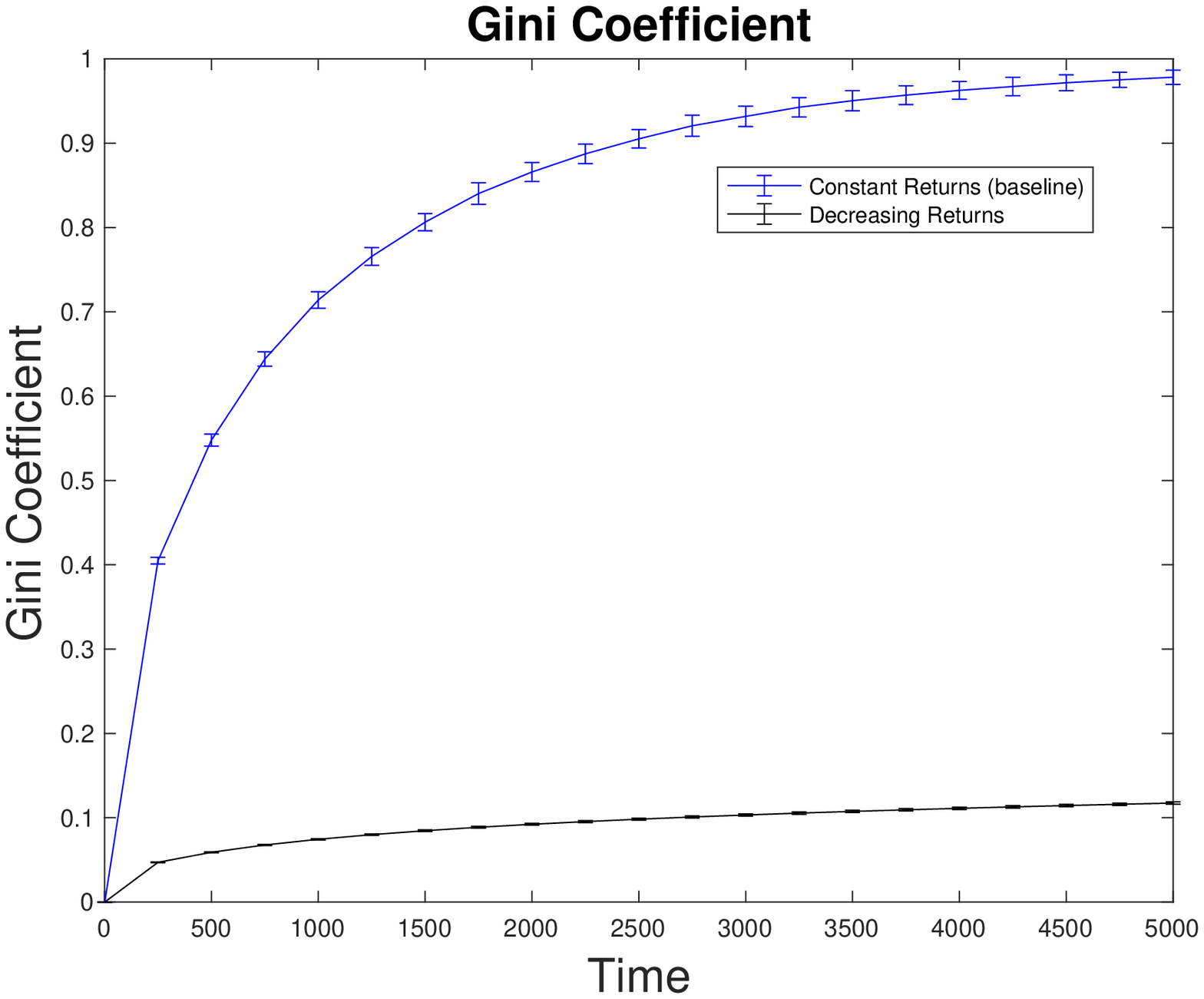}
\includegraphics[width=0.48\textwidth]{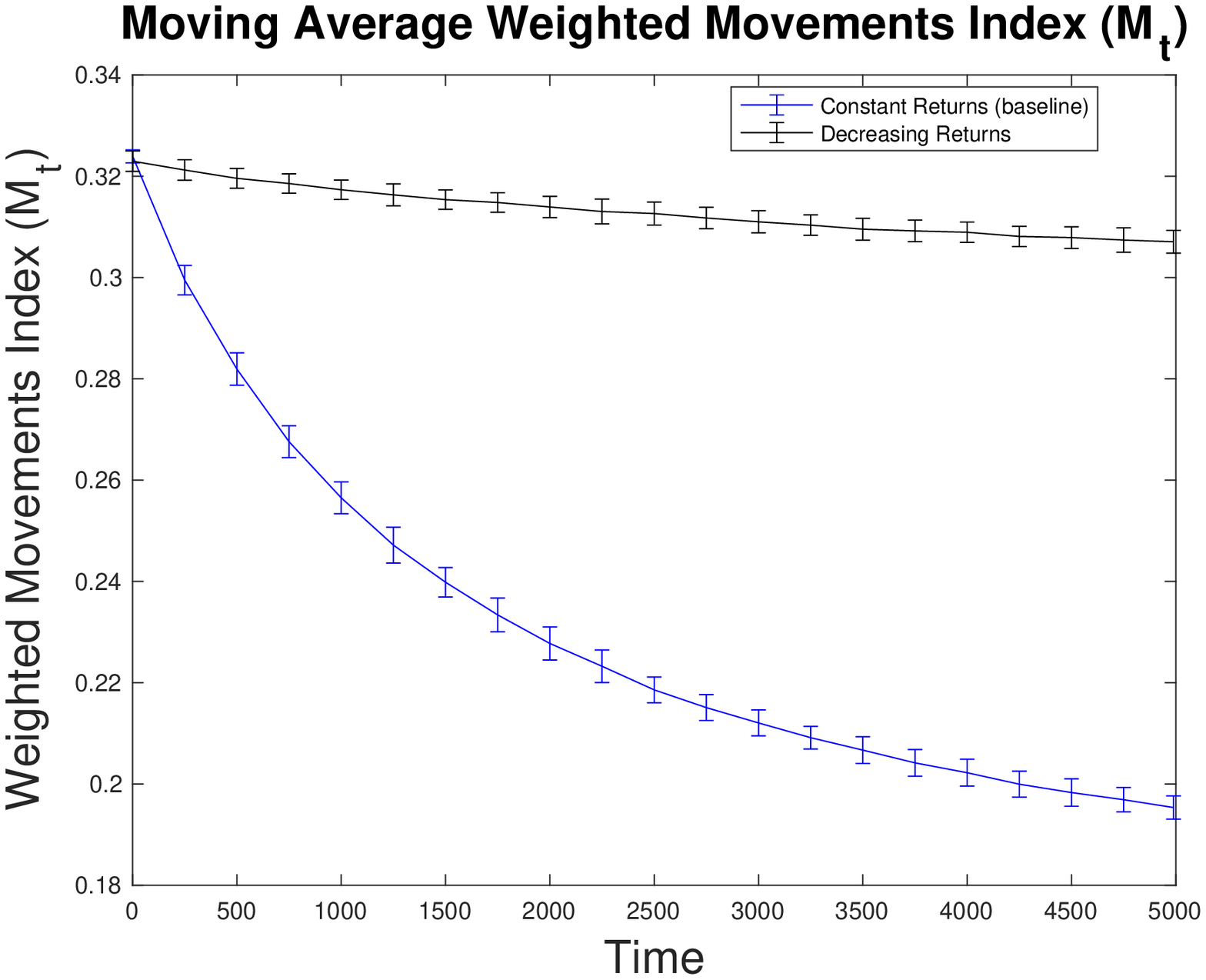}
\caption{Left Panel: Comparison of Gini Index over time under constant and decreasing returns to aggregate wealth. Right Panel: Comparison of Weighted Mobility Index $M_t$ over time under constant and decreasing returns to aggregate wealth. Averages and standard deviations are computed out of 100 simulations. In both panels and configurations: the simulations are computed under the baseline return structure with $r_{i,t} \sim N(0.05,0.05)$, and initial wealth $W_{i,t=0}=10 \,\,\, \forall i$. Under decreasing returns $r_{i,t}$ are modified according to Equation \ref{reduced}. } 
\label{Fig3B}
\end{figure}

This preliminary conclusion is corroborated by analysing simple return structure, which further allows to investigate the distinctive impact of the cumulative dimension of the baseline economic process. Simple return constitutes a reference logic which is diametrically opposite to compound return. It heuristically corresponds to the `financial capital maintenance' rule that is applied by corporate accounting systems. This rule computes the maintenance of invested shareholder equity from one period to the next one. The net difference is then reported as net earnings that can be distributed to shareholders (if positive), the undistributed part being accounted for as retained earnings. At the microeconomic level, simple return means that only net wealth is reinvested over time, while eventual net gains are consumed period after period. This heuristically corresponds to investing the same nominal amount in bonds and gilts repeatedly, without reinvesting proceeds through time (\citealt{biondi2011cost}). At the macroeconomic level, simple return may correspond to a capital formation process that is stationary and constrained over reinvestment. Capital stock is then reproduced rather than accumulated, its net contribution to total income being consummated over time. Under simple return structure, financial capital is  {\it remunerated} as productive factor but it is not {\it financially accumulated} over time. It is then treated as a flow factor (becoming analogous to labor in this respect), involving full consumption of all the net positive proceeds generated by capital wealth at each period of time. Formally: 
\begin{revs}
\begin{equation}
W_{i,T}=W_{i,1}  (1+\sum_{t=1}^T r_{i,t}) \,\,\,\, \text{with} \,\,\, r_{i,t}\geq-1 \,\,\, \forall \,\, i,t
\label{simplereturnstotal}
\end{equation}
\end{revs}

For sake of simulation, we compute actual simple returns under two return structures: $N(\mu_r, \sigma_r)$ and $gamma(a,b)$.
Computational results (Figure \ref{Fig4B}) show that wealth inequality does not accumulate under simple return structures, while wealth mobility remains virtually constant over time. Individuals go on adding their profits and losses to the initial invested capital, without reinvesting the proceeds. Asymptotically, those profits and losses compensate each other because they are generated by an additive stochastic process applied over the same initial amount. In particular, the Gini Index is decreasing and asymptotically near to zero, while the Wealth Mobility Index remains asymptotically higher than the baseline case. In Appendix \ref{tendtozero}, we analytically prove the following Lemma (as visualised by simulations in Figure \ref{Fig4B}):

\begin{lemma}
The Gini Index $G_t$ tends asymptotically to 0 for $t \rightarrow \infty$, under simple return structure, implying perfect equality among individuals.
\end{lemma}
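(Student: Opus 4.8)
The plan is to exploit the defining feature of the simple return structure, namely that a common linear drift dominates the idiosyncratic fluctuations, so that all wealths become asymptotically equal in \emph{relative} terms even though they diverge in absolute terms. Since every agent starts from the same endowment $W_{i,1}=10$, equation \ref{simplereturnstotal} gives $W_{i,T}=10\,(1+S_{i,T})$ with $S_{i,T}:=\sum_{t=1}^{T} r_{i,t}$, so that each $W_{i,T}$ is an affine image of a random walk whose i.i.d.\ increments have a positive mean $m:=\mathbb{E}[r_{i,t}]>0$ and finite variance. Because the Gini index is scale invariant, the claim is equivalent to showing that the relative dispersion of the $S_{i,T}$ collapses as $T\to\infty$.

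First I would rewrite the Gini index \ref{gini} in its equivalent mean-difference form. A direct rearrangement of \ref{gini} on the sorted sample (checking that the $1/(N-1)$ normalisation is exactly reproduced) yields the identity
\[
G_T=\frac{\sum_{i=1}^{N}\sum_{j=1}^{N}\lvert W_{i,T}-W_{j,T}\rvert}{2N(N-1)\,\overline{W}_T},\qquad \overline{W}_T:=\frac{1}{N}\sum_{i=1}^{N}W_{i,T}.
\]
Substituting $W_{i,T}=10(1+S_{i,T})$, the factor $10$ and the constant $1$ cancel between numerator and denominator, leaving
\[
G_T=\frac{\sum_{i,j}\lvert S_{i,T}-S_{j,T}\rvert}{2N(N-1)\,(1+\overline{S}_T)},\qquad \overline{S}_T:=\frac{1}{N}\sum_{i=1}^N S_{i,T}.
\]
This display makes the mechanism transparent: the denominator is inflated by the common positive drift, while the numerator depends only on the pairwise dispersion of the walks.

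Second I would control the two pieces separately via the Strong Law of Large Numbers. Since $N$ is finite, $S_{i,T}/T\to m$ almost surely \emph{simultaneously} for every $i$; hence $\overline{S}_T/T\to m>0$ and the denominator satisfies $1+\overline{S}_T=\Theta(T)\to+\infty$ almost surely. For the numerator, $(S_{i,T}-S_{j,T})/T\to 0$ almost surely, so each $\lvert S_{i,T}-S_{j,T}\rvert=o(T)$, and with only finitely many pairs the whole sum is $o(T)$ almost surely. The ratio of an $o(T)$ numerator to a $\Theta(T)$ denominator tends to $0$, which is the claim. (A weaker convergence-in-probability statement follows even more directly, since $\tfrac1T$ times the numerator tends to $0$ and $\tfrac1T(1+\overline{S}_T)$ tends to $m$ in probability, whence $G_T\to0$ by the weak law and Slutsky's theorem.)

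The genuine content lies not in the limit itself but in two supporting points. The mean-difference rewriting of \ref{gini} must be verified exactly, because the entire argument hinges on the clean cancellation of the common endowment and drift; this is a short algebraic identity on the order statistics, but the normalisation must be tracked carefully. The more delicate probabilistic point, which I expect to be the main obstacle, is that the numerator is genuinely sublinear \emph{almost surely} rather than merely in expectation: one invokes the almost-sure bound $S_{i,T}-mT=o(T)$ from the SLLN (or the sharper $O(\sqrt{T\log\log T})$ from the law of the iterated logarithm) for each of the finitely many walks, together with the fact that $1+\overline{S}_T$ stays bounded away from zero for large $T$ precisely because $m>0$. The hypothesis $\mu_r>0$ is thus essential: it forces the common drift to dominate, collapsing the relative wealth dispersion and driving the scale-invariant index $G_T$ to its minimum value $0$. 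The truncation $r_{i,t}\ge-1$ appearing in \ref{simplereturnstotal} is immaterial for the asymptotics, as it only replaces $\mu_r$ by some still-positive effective mean $m$.
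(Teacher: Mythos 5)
Your proof is correct, and while it runs on the same probabilistic engine as the paper's (equal initial endowments, SLLN applied to each agent's cumulative return, positive mean return), it differs at the decisive step --- and your version is the rigorous one. The paper argues pointwise equalisation of wealth: starting from $W_{i,1}=W_{j,1}=W_1$, it writes the gap $W_1\bigl[\sum_t r_{i,t}-\sum_t r_{j,t}\bigr]$ and asserts it tends to $0$ because each sum behaves like $t\mu_r$. Read literally, that step fails: the difference of two independent random walks has variance $2\sigma_r^2 t$, so the \emph{absolute} wealth gap between two agents diverges (typically of order $\sqrt{t}$); the law of large numbers only yields $\sum_t r_{i,t}-\sum_t r_{j,t}=o(t)$ almost surely. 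What is actually true --- and what your proof isolates --- is that the Gini index vanishes because it is scale-normalised: in the mean-difference representation $G_T=\sum_{i,j}\lvert W_{i,T}-W_{j,T}\rvert\,/\,\bigl(2N(N-1)\overline{W}_T\bigr)$, which you correctly verify reproduces the $1/(N-1)$ normalisation of Equation \ref{gini}, the denominator grows like $\Theta(T)$ thanks to $\mu_r>0$ while the numerator is $o(T)$ over the finitely many pairs, so the ratio collapses even though wealth differences do not. Your approach thus buys a sound proof of exactly the stated claim, makes the role of the hypothesis $\mu_r>0$ explicit (the paper never uses the sign of $\mu_r$, which should be a warning sign), and your closing relaxations --- that only convergence of $\frac{1}{t}\sum_t r_{i,t}$ to a common positive mean is needed, and that the truncation $r_{i,t}\geq -1$ is immaterial --- recover, with justification, the paper's own unproved remark that the result extends beyond Gaussian returns.
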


\begin{figure}[!ht]
\centering
\includegraphics[width=0.48\textwidth]{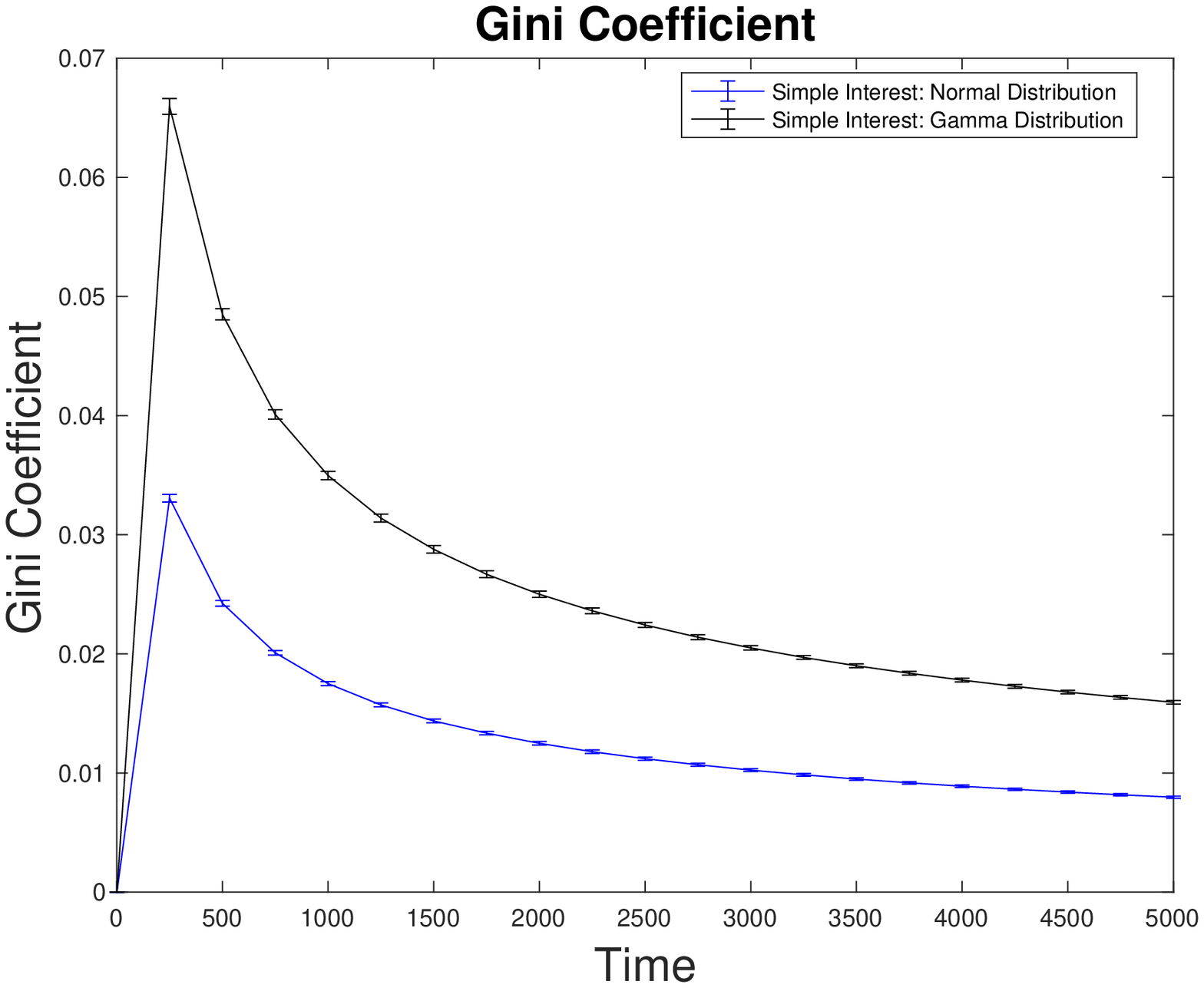}
\includegraphics[width=0.48\textwidth]{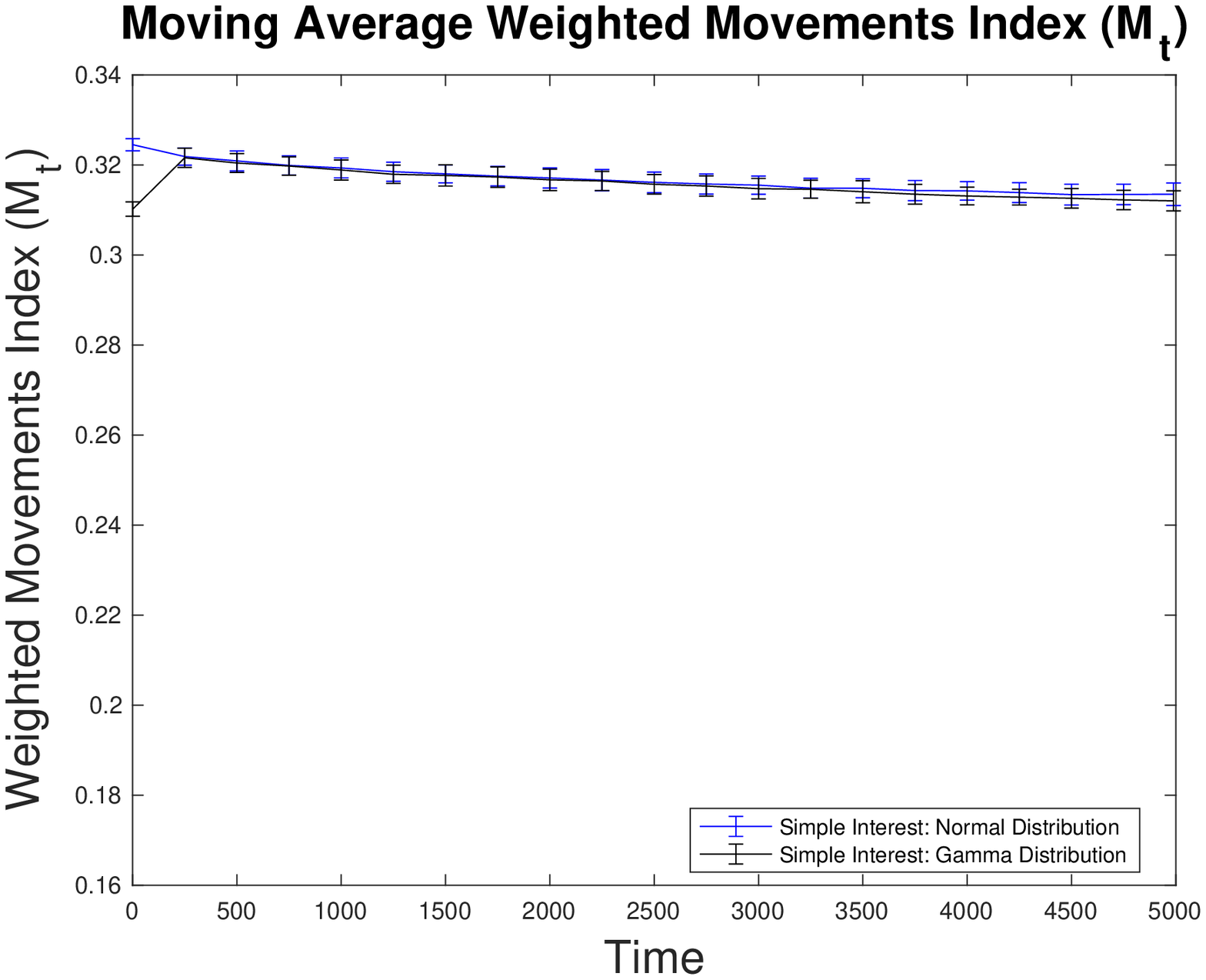}
\caption{Left Panel: Gini Index over time under simple return structure with normal distribution $N(\mu_r, \sigma_r)=N(0.05;0.05)$, and under simple return structure with Gamma distribution $gamma(a,b)=gamma(0.25;0.2)$.  The Gini Index asymptotically tends to zero (see Appendix \ref{tendtozero}). Right Panel: Moving average of Weighted Mobility Index under simple return structure with normal distribution $N(\mu_r, \sigma_r)=N(0.05;0.05)$, and simple return structure with Gamma distribution  $gamma(a,b)=gamma(0.25;0.2)$. Averages and standard deviations are computed out of 100 simulations. In all cases, initial wealth $W_{i,t=0}=10 \forall i.$}
\label{Fig4B} 
\end{figure}

Therefore, wealth inequality proves to be crucially dependent on the accumulation of returns through time, while this accumulation further undermines social mobility as expressed by relative levels of wealth.
In sum, the economic process modelled by Levy et al. denotes a significant connection between inequality and the financial accumulation process. This accumulation through time proves to be conducive to increased wealth inequality and decreased wealth mobility over historical time. However, its assumption of constant average returns to aggregate wealth under compound return structure seems unsustainable, because indefinite compounding cannot realistically hold in the long-run  (\citealt{voinov2007reconciling,biondi2011cost}). Moreover, IMF studies (\citealt{berg2013inequality,ostry2014redistribution}) show that inequality affects growth, with higher inequality being associated with lower growth.  It seems then unrealistic to assume a structurally stable economic process while the upper wealth tail goes on appropriating an ever-increasing part of total wealth, with the Gini Index asymptotically reaching the maximum value of one in the long-run. However, our model  clearly shows the logical and institutional tensions between the multiplicative logic embedded in compound return, and this potential trade-off between inequality and growth. Let alone, such a multiplicative process may involve a self-fulfilling decrease in returns, reducing then social welfare.

In order to extend this one-factor model, the next section shall introduce a two-factors model of the economic process, adding a flow factor which features an additive evolution to the stock factor characterised by a cumulative evolution through time.

\section{A model of aggregate economic process combining capital wealth (stock) and labour (flow) factors}\label{sec5}

According to \cite{oulton1976inheritance}, policy attitudes towards the inequality of wealth depend on views of its two paradigmatic causes. Accordingly, wealth inequality can be explained either ``because income was unequally distributed and hence some people saved more, in consequence accumulating more wealth'', or because wealth inheritance (accumulation) through time. Textbook macroeconomics introduces a stylised economic process that combines both economic factors: one stock factor (capital wealth) and another flow factor (labour). In this context, the Cobb-Douglas is a classic function of production for the aggregate economy, featuring factor returns to scale in its parameter space.
Extending the baseline model by Levy et al., a second flow factor can be introduced as follows. Total income comprises the sum of wealth income and labour income as follows:
\begin{equation}
Y^T_{i,T} = Y^W_{i,T} + Y^L_{i,T}
\end{equation}

Labour income can be consumed (for a share $1- s_{i,t}$) or saved (for a share $s_{i,t}$) at every period $t$. Total wealth comprises then accumulated wealth at compound returns and cumulated saved income as follows:
\begin{equation}
W^{SUM}_{i,T} =W_{i,t=1}\prod_{t=1}^T (1+ r_{i,t}) + \sum_{i,T} (s_{i,t} \cdot Y^L_{i,t})
\label{Wsum13}
\end{equation}
with $r_{i,t} \geq -1$ denoting pure wealth yield and $0 \leq s_{i,t} <1$ denoting saving share.

In this context, total wealth growth can be defined as follows:
\begin{equation}
 \text{Growth}_t = \frac{TW_t - TW_{t-1}}{TW_{t-1}}
 \label{Growth}
\end{equation}
where $TW_t$ applies Equation \ref{tw10} to $W^{SUM}_{i,T}$ defined in Equation \ref{Wsum13}.
For sake of generality and comparability, we introduce a fair condition between capital wealth yields and the labour income savings by setting the initial capital wealth so that it yields a weighted-mean permanent rent equal to the weighted-mean saved income at each period of time $t$. For this purpose, the perpetual rent value (to be equaled with the initial capital wealth) is computed by the ratio between the period rent income and its period return rate.
 
According to our configuration, the wealth return  $r_{i,t}$ defines the wealth yield $Y^W_{i,t} = r_{i,t}W_{i,t-1}$. The fair condition imposes that this wealth income is equal, on average, to the saved income of the period. The latter is defined as $Y^L_{i,t} = s_{i,t} Y^L_{t}$.

Accordingly, the fair condition imposes that:
\begin{equation}
W_{t=1} \equiv \frac{\sum_{i=1}^N (s_{i,t}\cdot Y^L_{i,t})}{\sum_{i=1}^N (r_{i,t})}
\end{equation}

Under this condition, our parameter space does not introduce any bias between the respective period incomes generated by the two factors on average (Figure \ref{Fig9}). At the onset, both factors contribute, on average, equally to total income.

\begin{figure}[!ht]
\centering
\includegraphics[width=0.48\textwidth]{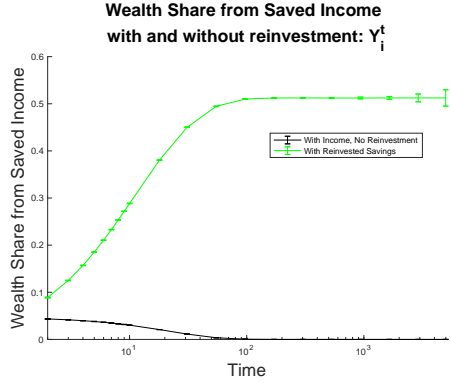}
\caption{Time evolution of the share of wealth generated by labour income and labour-income savings. Averages and standard deviations are computed out of 100 simulations.  See Section \ref{sec5} for details about the simulation conditions.}
\label{Fig9} 
\end{figure}
On this basis, we introduce labour income and savings with the same logic which Levy et al. apply to capital wealth. We introduce a stochastic saving rate  $s_{i,t} \in [0,1]$ for which:
\begin{itemize}
\item[] (i) all saving rates are equally likely, and
\item[] (ii) the average saving rate is always equal for all the individuals.
\end{itemize}
Under this assumption, taking $Y^L_{i,t}= 1$ and $s_{i,t} \sim U[0,1]  \,\, \forall i,t$ and recalling that: $\frac{1}{N}\sum_{i=1}^N (s_{i,t} Y^L_{i,t})=s_{mean} \equiv 0.5$ while $\frac{1}{N}\sum_{i=1}^N (r_{i,t})=\mu_r\equiv 0.05$, the fair condition imposes $W_{t=1} \equiv \frac{0.5}{0.05}=10$ consistently with our assumption for initial wealth $W_{i,t=1} = 10$ for all individuals.

Our computational analysis shows that the presence of a labor factor does not reshape the baseline economic process concerning wealth inequality and mobility. In particular, either if labour income cannot be saved  (that is, $s_{i,t} = 0 \,\,\forall  i,t$), or if savings cannot be reinvested (as in the previous Equation \ref{Wsum13}), the additive process of labour factor cannot match the multiplicative process of wealth accumulation. Consequently, the latter continues to dominate the aggregate economic process in the long run. Wealth inequality and mobility are not reshaped by the presence of that additive factor (Figure \ref{Fig5B}). 

\begin{figure}[!ht]
\centering
\includegraphics[width=0.48\textwidth]{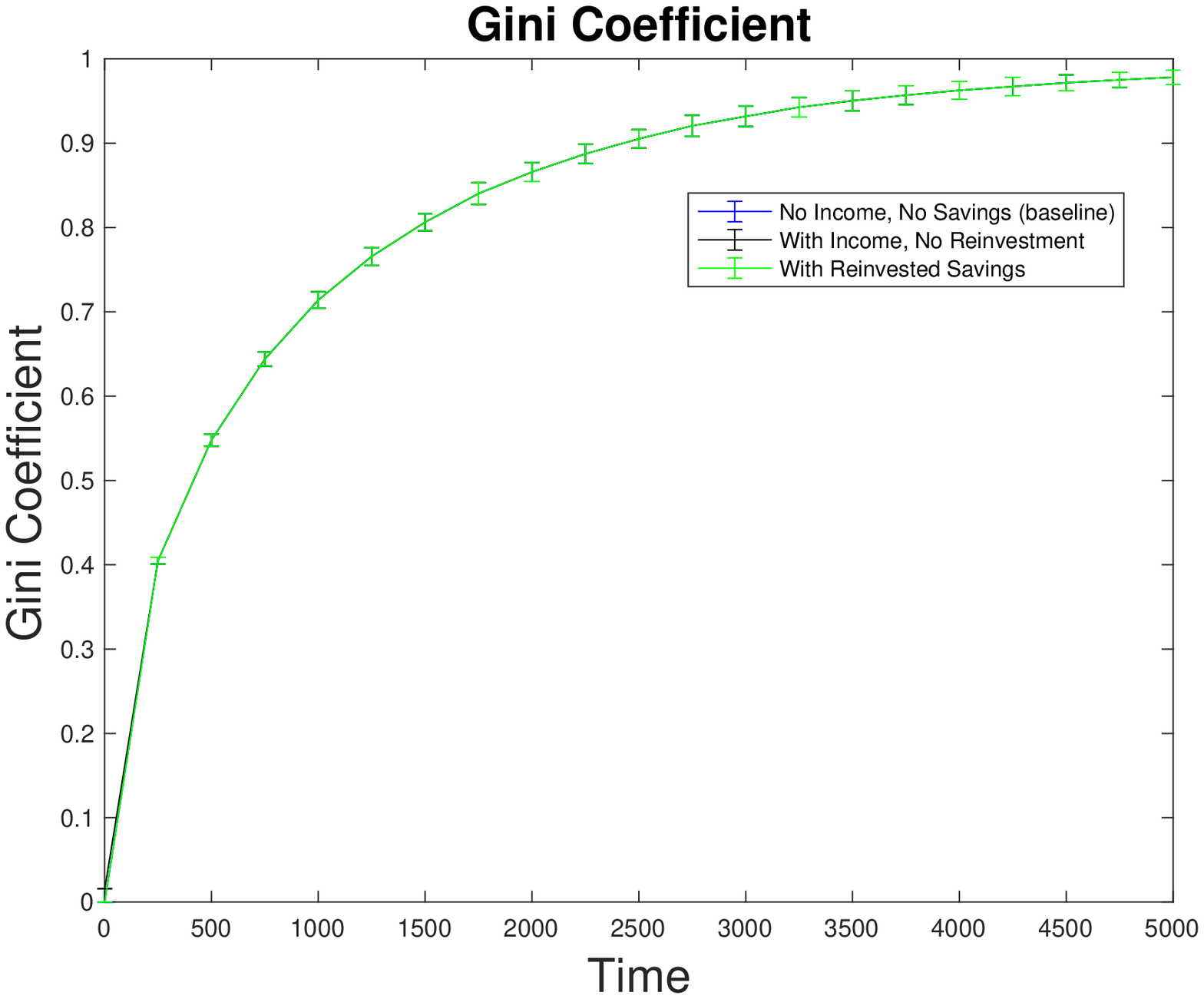}
\includegraphics[width=0.48\textwidth]{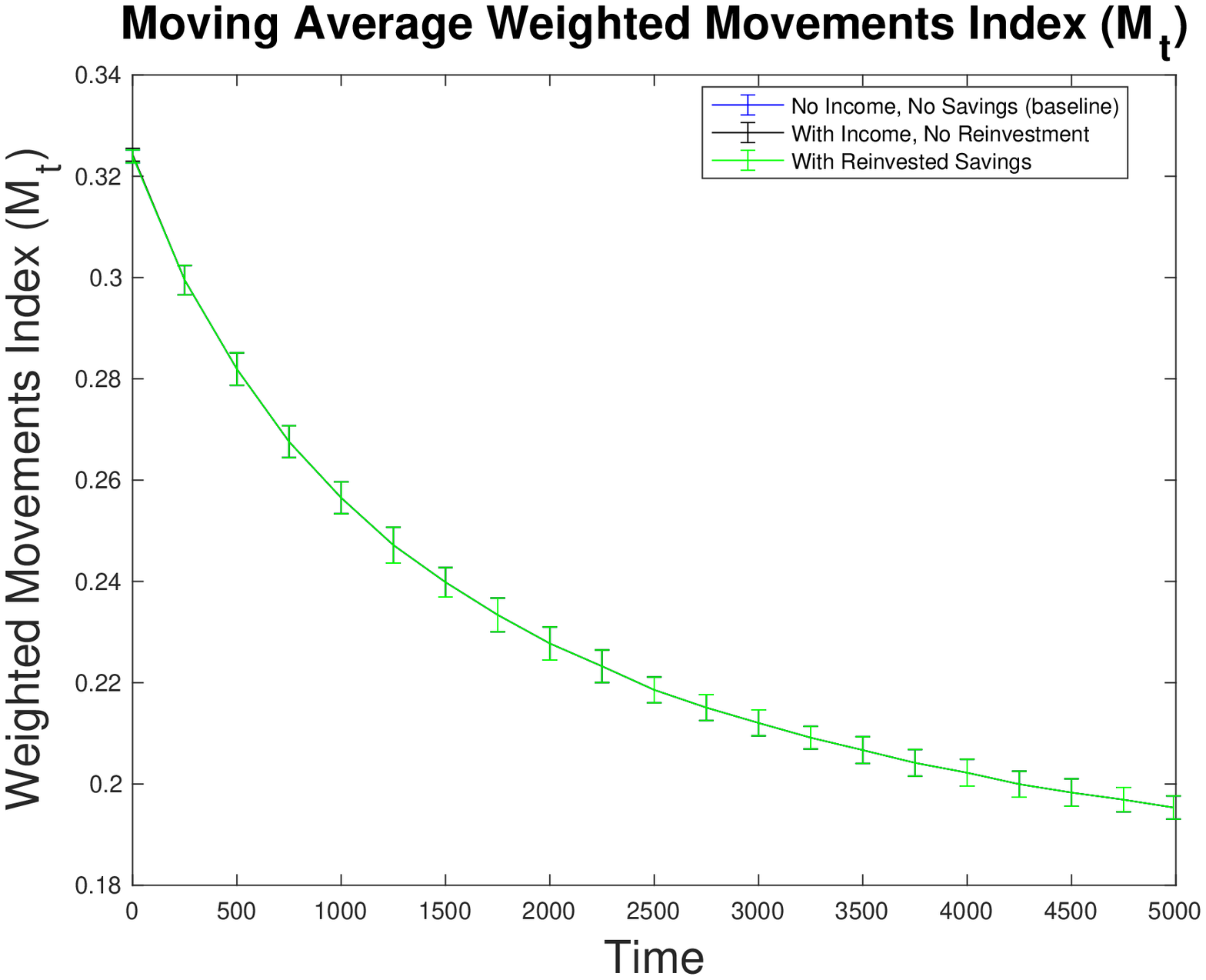}
\caption{Left Panel: Gini Index value over time under baseline case (one-factor model), twofactors model without labor-income savings, and two-factors model with reinvested savings over time. Right Panel: Moving average of Weighted Mobility Index under baseline case (one-factor model), two-factors model without labor income savings, and two-factors model with reinvested savings over time. Averages and standard deviations are computed out of 100 simulations. In all cases, initial wealth  $W_{i,t=0}=10 \forall i.$}
\label{Fig5B} 
\end{figure}

Labour income alone cannot reshape wealth concentration, inequality and mobility dynamics through time. Nevertheless, some may wonder whether reinvested savings from labour income could. To test this hypothesis we introduce a progressive wealth accumulation driven by labour income savings. Formally:
\begin{equation}
W_{i,t+1} = W_{i,t}(1+r_{i,t}) + (s_{i,t+1} Y^L_{i,t+1})
\end{equation}
In this case, the saved share of labour income $Y^L_i$ is progressively accumulated through compound returns along with inherited wealth across individuals and over time. Total individual wealth through time includes both saved wealth and inherited wealth, formally:
\begin{equation}
W^{SUM}_{i,t+1}=W_1 \prod_{h=1}^t (1+r_{i,h}) + \sum_{j=1}^{t+1} s_{i,j} Y^L_{i,j} \prod_{k=j}^t (1+r_{i,k})
\end{equation}
If labor income remains constant over time, i.e. $Y^L_{i,t}=Y_0 \,\,  \forall i,t$, this formula reduces to:
\begin{equation}
W^{SUM}_{i,t+1}=W_1 \prod_{h=1}^t (1+r_{i,h}) + Y_0\sum_{j=1}^{t+1} s_{i,j}  \prod_{k=j}^t (1+r_{i,k})
\label{totalwtwofactorreduced}
\end{equation}
In Equation \ref{totalwtwofactorreduced}, we maintain that $r_{i,h}$ (returns over inherited wealth) have the same structure as $r_{i,k}$ (returns over accumulated savings). For sake of simulation, we continue to assume $Y_0 = 1 \,\, \forall i,t$.

According to our computational results (Figure \ref{Fig5B}), even this progressive reinvestment of savings cannot reshape wealth inequality and mobility under the fair condition stated above. Wealth distributions and all the indexes maintain the same behaviour as under the baseline case. \footnote{The same results hold when actual returns are derived from simple return structure. Computational results are available under request.}

Although each individual draws savings from labour income according to a uniform distribution, these savings are reinvested according to the same financial accumulation process as the inherited wealth, which is initially equal for all agents. Therefore, savings from labour income do not introduce an alternative reference logic or a complementary economic process. Consequently, they cannot reshape wealth distribution, inequality and mobility over time.

In sum, under both the Levy et al. model and the widespread two-factors model of the aggregate economic process, wealth concentration, inequality and mobility depend crucially on the compound return structure that characterises the accumulation of financial investment over time. The decomposition of wealth dynamics in two factors of productions does not change its evolution over time.

In fact, our computational results further show emerging aggregate behaviour that proves distant from economic reality, echoing the \citealt[(p. 81)]{kaldor1938theory}'s claim that:
\begin{itemize} \item[] The entire notion of `factor of production' is an incubus on economic analysis, and should be eliminated from economic discussion as summarily as possible. \end{itemize}
In a similar vein, \cite[(p. 138)]{solow1976} acknowledged the aggregation problem as follows:
\begin{itemize} \item[] I have to insist again that anyone who reads my 1955 article [\cite{solow1955production}] will see that I invoke the formal conditions for rigorous aggregation not in the hope that they would be applicable [. . .] but rather to suggest the hopelessness of any formal justification of an aggregate production function in capital and labor.  
\end{itemize}

Nevertheless, our computational analysis shed some light on theoretical and applied implications that are implicitly assumed in widespread representations and models of the aggregate economic process, as reviewed and further developed by \cite{fernholz2014instability} and \cite{bertola2006income} among others. From this perspective, the following section shall expand upon our preliminary conclusions about inequality, mobility and the financial accumulation process. We shall introduce a stylised institutional configuration that typically frames and shapes income and wealth dynamics in economy and society: taxation.

\section{The impact of taxation and redistribution}\label{sec6}

Taxation is a classic matter related to income and wealth distributions. Available statistics do extensively rely upon fiscal data for gathering evidence (\citealt{SaezZucmanslide,topritzhofer1970outline}). It is then interesting to explore its effect on wealth concentration, inequality and mobility by expanding the baseline model by Levy et al. through featured modes of taxation.

For simulation purpose, we introduce four stylised modes of taxation, all governed by a central authority that knows and intervenes over the individual positions of wealth and income (change in wealth) at the end of each period. Four modes of taxation combine two methods of tax levy with two methods of tax distribution:
\begin{itemize}
\item Concerning tax levy, we assume either a uniform proportional tax rate for all individuals (uniform taxation), or a progressive tax rate increasing with the tax base (progressive taxation).
\item Concerning tax distribution, we assume either a uniform redistribution to all individuals (featuring the provision of universal public service), or a regressive redistribution that decreases with the tax base (featuring provision of direct transfers for welfare to the polity).
\end{itemize}
Under proportional taxation and public service model (\textbf{Proportional \& PS}), the tax authority levies a fixed universal share $\tau_{i,t} = \tau \,\,\forall \,\, i,t$ on positive net changes in wealth $\mathcal{T}_{i,t}=\max\left\{W_{i,t}-W_{i,t-1}; 0\right\}$, and redistributes the total levied amount equally among all the individuals. This model features the provision of public services to the polity through proportional taxation as follows:
\begin{equation}
Tax_{i,t}=\mathcal{T}_{i,t}\tau;
\label{taxproportional}
\end{equation}

\begin{equation}
Subsidy_{i,t}=\frac{\sum_k Tax_{k,t}}{N} 
\label{redistributionLiberal}
\end{equation}

Under proportional taxation and welfare model (\textbf{Proportional \& Welfare}), the tax authority employs the proportional tax levy to provide direct transfers to the polity. These transfers are redistributed in regressive proportion to individual wealth. Individual tax is computed according to Equation \ref{taxproportional}, while the redistribution is managed according to the following formula:

\begin{equation}
Subsidy_{i,t}=\left[1-  \frac{\mathcal{T}_{i,t}}{\sum_{k=1}^N  \mathcal{T}_{k,t}} \right] \frac{1}{N-1} \sum_k Tax_{k,t}
\label{redistributionwelfare}
\end{equation}

Under progressive taxation and public service model (\textbf{Progressive \& PS}), the tax authority levies a progressive share $\mathcal{T}_{i,t}$ of net changes in wealth. Accordingly, the richest individual applies the maximum rate $\tau_{max}=\tau_{i,t}(\max \mathcal{T}) \,\, \forall t$, while the poorest individual does not pay anything. On this basis, the tax authority provides a universal public service to all the individuals. Individual tax payment is defined as follows:
\begin{equation}
Tax_{i,t}=\mathcal{T}_{i,t} \cdot \tau_{max} \frac{\mathcal{T}_{i,t} - \min \mathcal{T}_{i,t}}{\max \mathcal{T}_{i,t} - \min \mathcal{T}_{i,t}}
\label{taxprogressive}
\end{equation}
On this basis, the individual $i$ having the maximal tax base $\mathcal{T}_{i,t}$ applies the maximal tax rate $\tau_{max}$, while the individual $i$ having the minimal tax base $\mathcal{T}_{i,t}$ does not pay taxes (i.e., its tax rate $\tau_{i,t}=0$ at time period $t$). The redistribution mechanism under this model follows Equation \ref{redistributionLiberal}.

Finally, under progressive taxation and welfare model (\textbf{Progressive \& Welfare}), the tax authority employs the tax levy denoted by Equation \ref{taxprogressive} in order to redistribute the total levied amount in a regressive proportion to wealth, according to Equation \ref{redistributionwelfare}. This model features the provision of direct transfers to the polity, funded by progressive tax levy.

In sum, our computation analysis features four stylised modes of taxation, summarized in Table \ref{Table1}.

\begin{table}[h!]
\caption{Stylised modes of taxation and redistribution}
\centering
\footnotesize
\begin{tabular}{ | c | c | c |}
\hline
\multicolumn{1}{|r|}{\textbf{Tax redistribution}}  & \textbf{Proportional} & \textbf{Regressive} \\
\multicolumn{1}{|l|}{\textbf{Tax Levy}} & \textbf{Redistribution Rate} & \textbf{Redistribution Rate} \\ \hline
\textbf{Proportional Tax rate} & Proportional taxation $\with$ public service & Proportional taxation $\with$ welfare \\ \hline
\textbf{Progressive Tax Rate} & Progressive taxation $\with$ public service & Progressive taxation $\with$ welfare \\ \hline
\end{tabular}
\label{Table1}
\end{table}

We assess the impact of taxation over wealth inequality and mobility under these stylised modes of taxation. 
For sake of simulation, we retain a tax rate $\tau=0.05$ for the proportional tax system and a maximum tax rate of $\tau_{max}=0.10$ for the progressive taxation. Under progressive tax regimes, this framework implies a tax-rates structure that endogenously depends on the distributions of income (net wealth change) and wealth and their evolution over time. Computational results (Figure \ref{Fig6B}, \ref{Fig6D}) show that all modes of taxation and redistribution are effective in reducing and stabilising wealth inequality (Figure \ref{Fig6B}, Left Panel), while improving and stabilising wealth mobility (Figure \ref{Fig6B}, Right Panel). Contrary to savings from labour income, taxation effectively introduces an alternative economic logic and a complementary economic process in our miniature economy. This alternative and complementary collective action proves to be effective in compensating the impact of financial accumulation over wealth inequality and mobility.

\begin{figure}[!ht]
\centering
\includegraphics[width=0.48\textwidth]{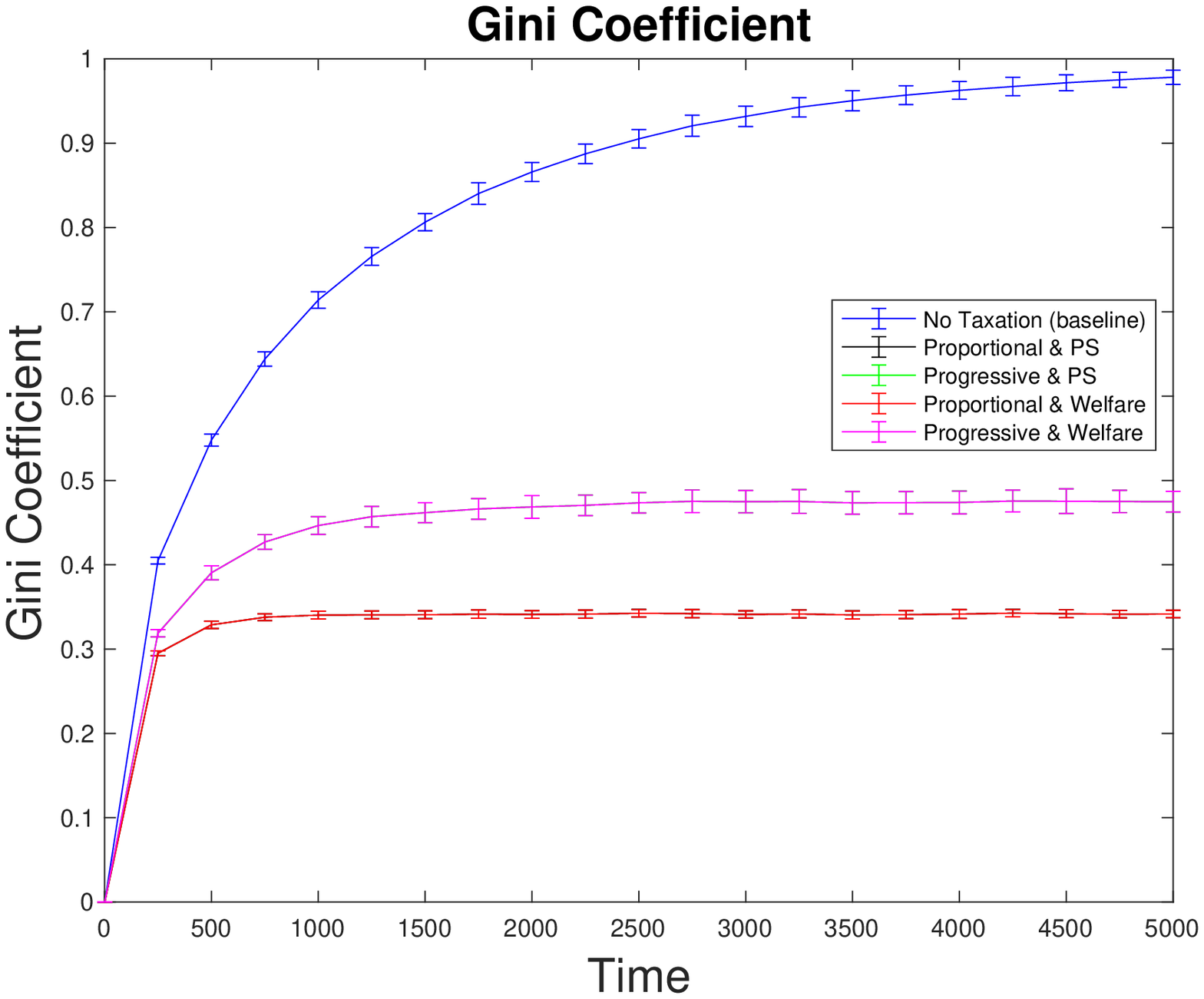}
\includegraphics[width=0.48\textwidth]{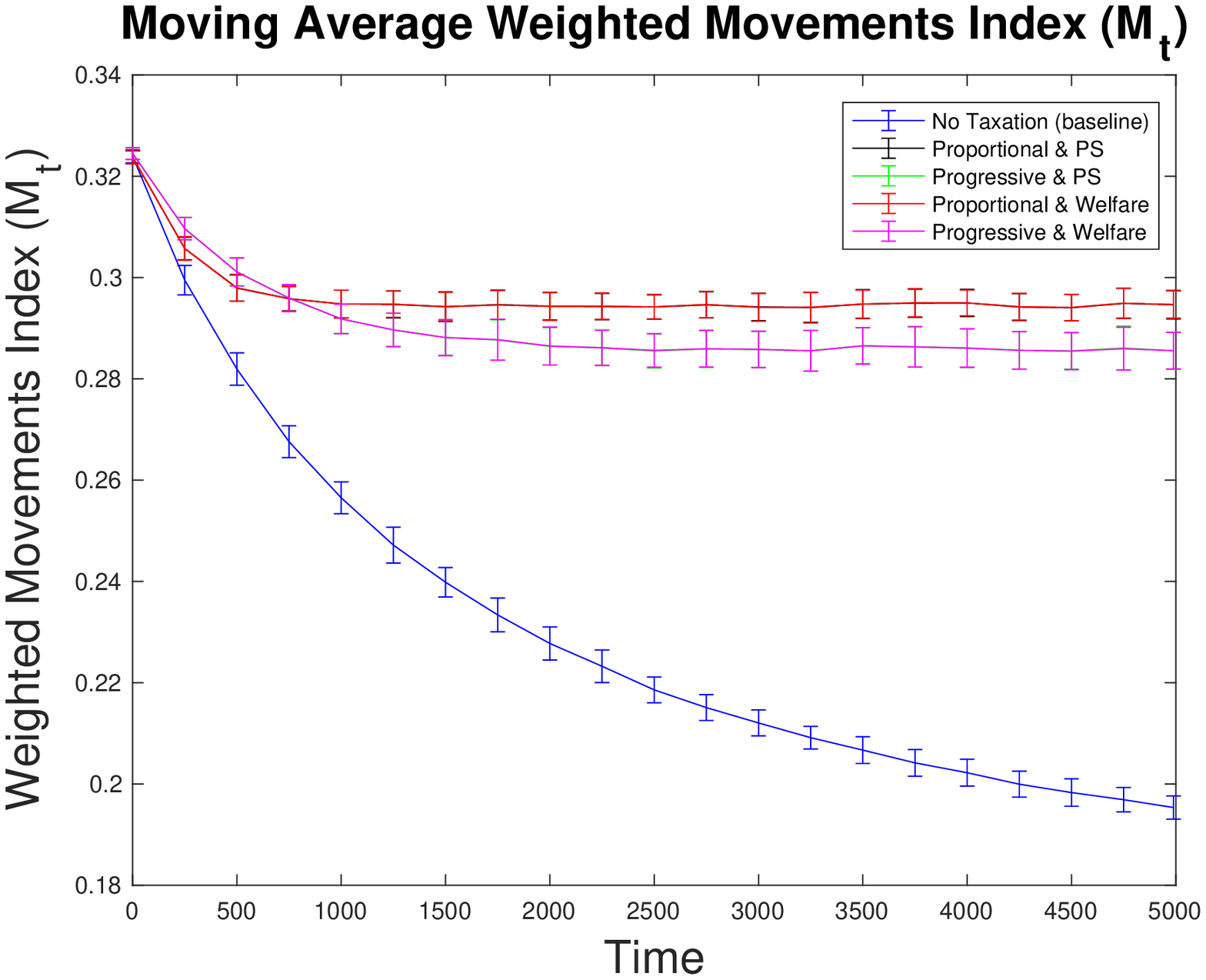}
\caption{Left Panel: Gini Index over time under baseline case (no taxation), proportional taxation and public service model (\textbf{Proportional \& PS}), proportional taxation and welfare model (\textbf{Proportional \& Welfare}), progressive taxation and public service model (\textbf{Progressive \& PS}) and progressive taxation and welfare model (\textbf{Progressive \& Welfare}). Right Panel: Weighted Mobility Index over time under the same cases. Averages and standard deviations are computed out of 100 simulations.  In all cases, initial wealth $W_{i,t=0}=10 \,\,\, \forall i$. See Section \ref{sec6} for details about implementation of  taxation and redistribution systems.}
\label{Fig6B} 
\end{figure}

In particular, Gini Index is consistently and materially inferior to the baseline case, while it remains asymptotically far from one. Wealth mobility indicators are consistently and materially superior to the baseline case featured by the Levy et al. model.

\begin{figure}[!ht]
\centering
\includegraphics[width=0.48\textwidth]{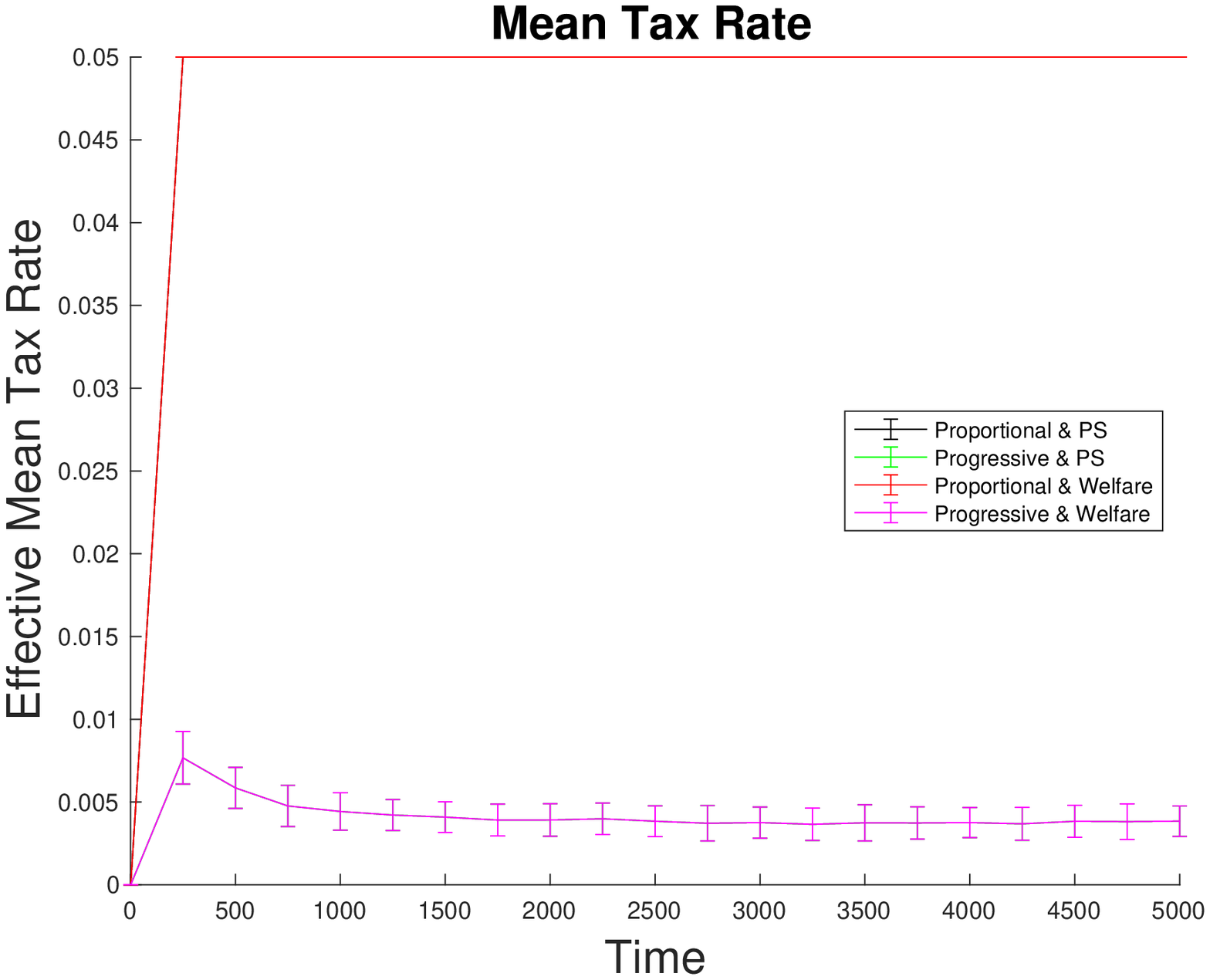}
\includegraphics[width=0.48\textwidth]{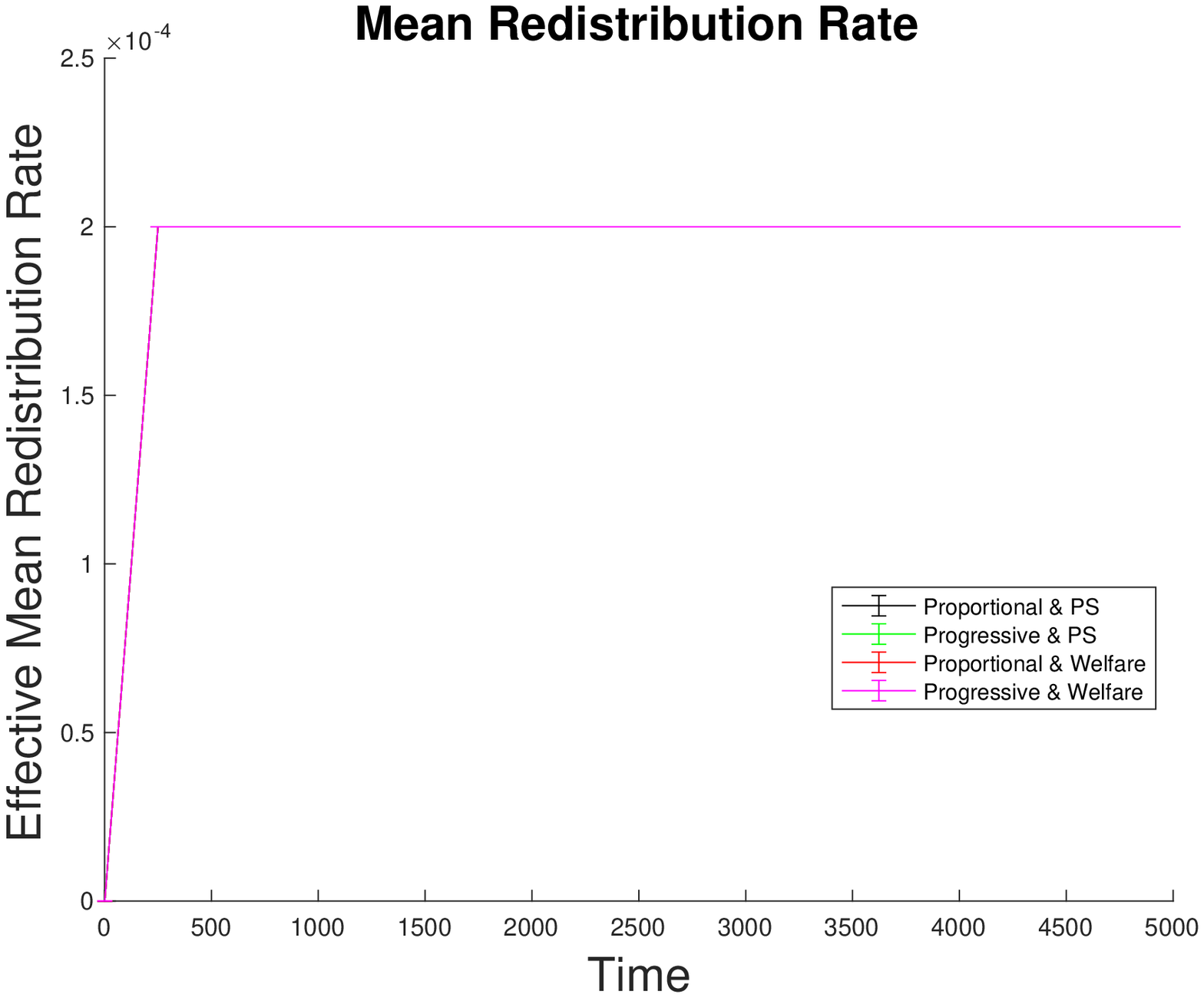}\\
\includegraphics[width=0.48\textwidth]{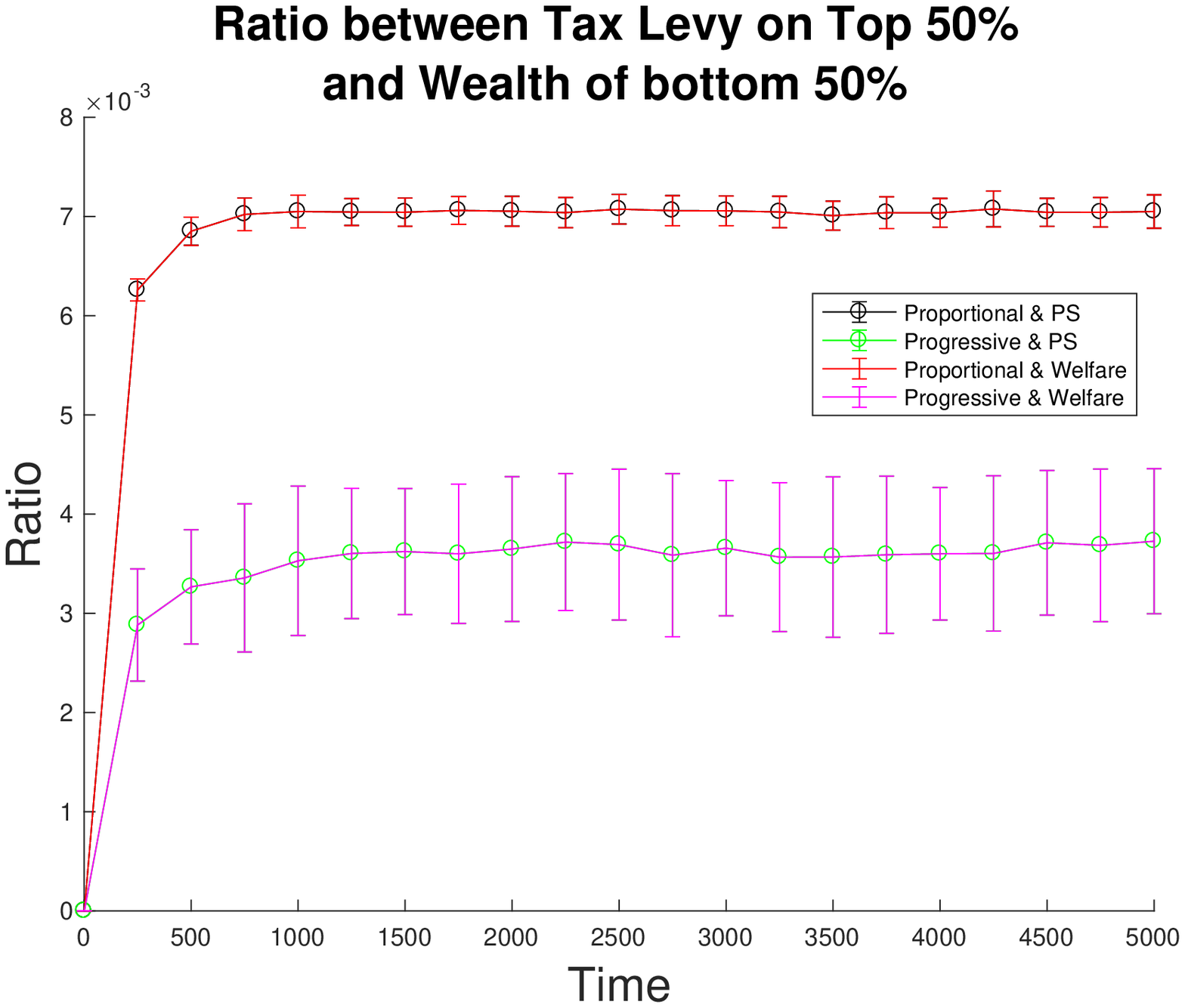}
\caption{Left Panel: Mean Tax Rate over time under baseline case (no taxation), proportional taxation and public service model (\textbf{Proportional \& PS}), proportional taxation and welfare model (\textbf{Proportional \& Welfare}), progressive taxation and public service model (\textbf{Progressive \& PS}) and progressive taxation and welfare model (\textbf{Progressive \& Welfare}). Right Panel: Mean Redistribution rate over time under the same cases. Bottom Panel: Ratio between the total tax levied on the richest 50\% of the population, relative to the total wealth of the bottom 50\% of the population. Averages and standard deviations are computed out of 100 simulations. In all cases, initial wealth $W_{i,t=0}=10 \,\,\, \forall i$. See Section \ref{sec6} for details about implementation of  taxation and redistribution systems.}
\label{Fig6D}
\end{figure}

The actual relative impact on wealth inequality and mobility across modes of taxation depends on the parameter space assumptions. Computational results (Figure \ref{Fig6D}) for mean tax and redistribution rates explain why progressive taxation is less effective than proportional taxation in reshaping wealth inequality and mobility, under our framework of analysis. As mentioned above, individual tax rates under progressive tax regimes depend on the underlying distribution of wealth. Since the latter is increasingly right-skewed over time, this dependency involves a mean tax rate that progressively becomes and remains very low over time (materially inferior to the mean tax rate of 0.05 applied under proportional tax regimes). Consequently, the impact of progressive tax regimes over wealth inequality is materially reduced both in absolute terms, and relative to proportional tax regimes which apply an exogenous fixed tax rate. Moreover, since wealth distribution is increasingly and materially right-skewed over time, a relatively low tax rate is sufficient to asymptotically stabilise the Gini Index (Figure \ref{Fig6B}, Left Panel). Wealth is so concentrated on the top (see Lemma \ref{prop1}) that a relatively low tax extraction from the richer is sufficient to materially increase wealth of the poorer, involving a stabilising effect of wealth inequality over time (Figure \ref{Fig6D}, Bottom Panel). This result does not establish preference for, or superiority of proportional tax regimes over progressive tax regimes. Indeed, the effectiveness of the proportional tax system, while guaranteeing higher tax collection, neglects the regressive nature of this tax mechanism.  A policy implication of this result is that effectiveness of fiscal systems depends on the underlying economic structure and process. Therefore, our analysis would recommend tax authorities committed to progressive tax regimes to maintain tax rate structures based on absolute wealth thresholds and independent from relative wealth levels. The latter tax authorities should secure a sufficient degree of progressiveness of taxation, as well as a sufficiently high top tax rates.

In conclusion, taxation materially reduces wealth concentration and inequality, compensating the impact of financial accumulation process. Taxation proves therefore to be effective in counterbalancing the inequality effects of the financial accumulation process.  This result is consistent with \cite{fernholz2014instability} arguing that ``the presence of redistributive mechanisms then ensures the stability of the distribution of wealth over time''.  

\section{Concluding remarks}
The poet Trilussa mocked national statistics to be that accounting method for which, one individual having eaten two chickens and another one just none, both would result to have eaten one chicken each.\footnote{``Me spiego: da li conti che se fanno $\backslash$ seconno le statistiche d'adesso $\backslash$
risurta che te tocca un pollo all'anno: $\backslash$ e, se nun entra nelle spese tue, $\backslash$
t'entra ne la statistica lo stesso $\backslash$ perch\`{e} c'\`{e} un antro che ne magna due."  (Trilussa, La statistica)}
 Students of income and wealth distributions may keep this adage in mind while developing related macroeconomic modelling, especially under the representative agent assumption.
 
Our computational economic analysis shows the significant connection between inequality and the financial accumulation process in the study of income and wealth distributions. This connection has been investigated through progressive extensions of the baseline model introduced by Levy et al.
Our analysis shows the limited heuristic contribution of a two factors model comprising one single stock (capital wealth) and one single flow factor (labour) as pure drivers of aggregate income and wealth generation and allocation over time. We further show the theoretical contribution of minimal institutions (\`{a} la Shubik), to partly overcome this limitation. In particular, we investigate heuristic models of taxation in line with the baseline approach. Drawing upon our computational economic analysis, we can infer that the financial accumulation process plays a significant role as socioeconomic source of inequality, while institutional configurations including taxation play another significant role in framing and shaping the aggregate economic process that evolves over socioeconomic space and time. Our computational economic analysis is based upon a simple modelling strategy combined with a calibration that is suitable for comparing alternative model configurations. This calibration does not necessarily fit empirical regularities. Therefore, we cannot infer empirical or forecasting predictions, but rather theory-driven implications that deserve further consideration from theoretical and applied viewpoints. Wealth inequality and mobility are important socio-economic dimensions of our economy and society. Increased wealth inequality may raise fairness issues, undermining economic sustainability and development through historical time. Decreased wealth mobility may raise further fairness issues, undermining socio-economic incentives to entrepreneurship and workmanship. Concerning wealth inequality and mobility issues, our computational economic analysis points to featuring drivers that deserve further attention by researchers and policy-makers. First of all, the financial accumulation process appears to be the key driver of both issues, generated by the peculiar compound return structure that characterises financial investment in widespread institutional configurations. Its contribution to wealth inequality and mobility further appears to fundamentally depend on the financial market dynamics featuring volatility clustering and extreme events. Labour income and savings do not appear to be able to rebalance the impact of this financial accumulation process through historical time. Contrastingly, taxation appears to be effective in compensating its effect. Finally, according to our computational economic analysis, the causes of recent increases in wealth inequality may be sought in socioeconomic transformations of financial market dynamics and taxation (including fiscal niches exploitation, tax avoidance and the flattening of tax progressiveness) over recent decades. From our theoretical perspective, return structure, volatility and exuberance in financial markets, as well as the working of fiscal systems are candidates to drive wealth inequality and wealth mobility in our economy and society.

\section*{acknowledgements}
We dedicate this article to the memory of Prof. Pierpaolo Giannoccolo, coauthor and dear friend of us. Pierpaolo had been a committed team member and was working with us on furthering the understanding of the financial economic process. We thank Alan Kirman, Thomas Piketty, Stefano Olla, Shyam Sunder and Marco Valente for their insightful comments and suggestions. Previous versions of this article were presented at the `International Economic Law and the Challenge of Global Inequality' Conference, the 20th Annual Workshop on the Economic Science with Heterogeneous Interacting Agents (WEHIA 2015), and at the Applied Economics Lunch Seminar, Paris School of Economics (2 June 2015).

\section*{Appendix}

\subsection{Proof of Lemma 1 - Gini evolution for Compound Interest Structure}
\label{tendtoone}

\begin{proof}
We aim at proving that, for compound interest structure, $G_t \rightarrow 1$ as $t \rightarrow +\infty$.

\vspace{0.3cm}

Our proof draws upon \cite{fernholz2014instability}.
Our model for compound return structure replicates the background structure of the \cite{fernholz2014instability} model, as represented by their Equation 10. Accordingly, with our notation:
\begin{equation}
W_{i,t} = W_{i,t=1} \cdot e^{r(t)}
\end{equation}
This equation denotes continuously compound return structure over time, with return function $r(t)$ depending on a standard Brownian motion.
In this context, \cite{fernholz2014instability}'s Theorem 2 proves that, if $\sigma_r > 0$, the time-averaged share of total wealth held by the wealthiest single household converges to one, almost surely (their Equation 13), although it is not the same household to maintain the leading position over time. Analytically:
\begin{equation*}
\lim_{t\rightarrow +\infty} \int_0^t \theta_{i,t} (t) dt =1 \,\,\, \text{a.s.} 
\end{equation*}
with $\theta_{i,t} = \frac{W_{i,t}}{\sum_t W_{i,t}}$ and where $\theta_{max,t} = \max_{i} [ \theta_{i,t} ]$.

Our model applies a discretely compound return structure as follows: 
\begin{equation*}
W_{i,t} = W_{i,t=1}  \prod (1 + r_{i,t})
\end{equation*}

Without loss of generality, this structure can be made continuous by making the time change infinitesimal as follows: 
\begin{eqnarray}
W_{i,t} \rightarrow W_{i,t-1} \cdot  e^{R(t)}  \,\,\text{with} \,\, dt \rightarrow 0 \\
R(t) = \ln (1 + r_{i,t})
\end{eqnarray}
Where the $R(t)$ function transforms our return $r_{i,t}$ from discrete to continuous time.
This formulation is analogous to \cite{fernholz2014instability} formula, since $r_{i,t} \sim N (\mu_r ; \sigma_r)$ by construction. Their proof applies then to it.
\end{proof}

\subsection{Proof of Lemma 3 -- Gini evolution for Simple Interest Structure}
\label{tendtozero}

\begin{proof}
We aim at showing that, for simple interest structure, $G_t \rightarrow 0$ as $t \rightarrow +\infty$.

The Gini Index tend to 0 if the wealth of all individuals tends to be equal for $t \rightarrow +\infty$.
By construction, $\forall t$
\begin{revs}
\begin{equation*}
W_{i,t} = W_{i,1} (1 + \sum_t r_{i,t})
\end{equation*}
\end{revs}
The equality condition imposes that $W_{i,t} = W_{j,t}$, thus:
\begin{revs}
\begin{equation*}
W_{i,1} (1 + \sum_t r_{i,t} ) = W_{j,1} (1 + \sum_t r_{j,t} )
\end{equation*}
\end{revs}
or 
\begin{revs}
\begin{equation}
W_{i,1} (1 + \sum_t r_{i,t} ) - W_{j,1} (1 + \sum_t r_{j,t} )=0
\label{eq1prova}
\end{equation}
\end{revs}

Given that $W_{i,1} = W_{j,1} = W_{1}$, for each $i, j$ by assumption (i.e., equal initial wealth for all agents), Equation \ref{eq1prova} becomes:
\begin{revs}
\begin{equation}
W_1 + W_1 \sum_t r_{i,t}  - W_1 - W_1 \sum_t r_{j,t} = 0
\end{equation}
\end{revs}
or simplifying:
\begin{equation*}
W_1 [ \sum_t ( r_ {i,t}  ) -  \sum_t ( r_{j,t}) ] = 0
\end{equation*}
since $r \sim N(\mu_r,\sigma_r)$ by construction and since $\sum r_{h,t} \rightarrow t \cdot \mu_r$ if $t \rightarrow +\infty \,\,\, \forall h = i,j$, then:
\begin{equation*}
W_1 \left[ \sum_t (r_ {i,t}) - \sum_t (r_{j,t}) \right] \rightarrow 0 \,\, \text{for} \,\, t \rightarrow +\infty
\end{equation*}
Q.D.E.
\begin{remark}
The same result can be relaxed and hold if $r$ has a distribution stably converging to its mean $\mu_r$ for $t \rightarrow +\infty$
\end{remark}
\begin{remark}
If we introduce heterogeneous initial distribution of wealth (that is, $W_{i,1} \neq W_{j,1}$ for some $i,j$), it can be proved that the simple return dynamics tends to be neutral on the initial ranking for $t \rightarrow +\infty$, that is, the initial ranking is maintained in the long-run under the simple return structure (see also \citealt{biondi2018financial}).
\end{remark}
\end{proof}

\bibliographystyle{spbasic}
 \newcommand{\noop}[1]{}

\end{document}